\newtheorem{claim}{Claim}
\begin{abstract}
Key management in wireless sensor networks faces several new challenges. The scale, resource limitations, and new threats such as node capture necessitate the use of an on-line key generation by the nodes themselves. However, the cost of such schemes is high since their secrecy is based on computational complexity. Recently, several research contributions justified that the wireless channel itself can be used to generate information-theoretic secure keys. By exchanging sampling messages during movement, a bit string can be derived that is only known to the involved entities. Yet, movement is not the only possibility to generate randomness. The channel response is also strongly dependent on the frequency of the transmitted signal. In our work, we introduce a protocol for key generation based on the frequency-selectivity of channel fading. The practical advantage of this approach is that we do not require node movement. Thus, the frequent case of a sensor network with static motes is supported. Furthermore, the error correction property of the protocol mitigates the effects of measurement errors and other temporal effects, giving rise to an agreement rate of over 97\,\%. We show the applicability of our protocol by implementing it on MICAz motes, and evaluate its robustness and secrecy through experiments and analysis.
\end{abstract}
\newcommand{\psfragLabel}[3][\footnotesize]{\psfrag{#2}[][]{#1{#3}}}
\newcommand{\psfragTick}[2][\tiny]{\psfrag{#2}[][][1][0]{#1{$#2$}}}
\newcommand{\psfragnTick}[2][\tiny]{\psfrag{-#2}[][][1][0]{#1{-$#2$}}}
\newcommand{\psfragLegend}[2]{\psfrag{#1}[][]{\footnotesize{#2}}}
\newcommand{\psfragFixChannels}[0]{%
\psfragTick{11}\psfragTick{12}\psfragTick{13}\psfragTick{14}\psfragTick{15}%
\psfragTick{16}\psfragTick{17}\psfragTick{18}\psfragTick{19}\psfragTick{20}%
\psfragTick{21}\psfragTick{22}\psfragTick{23}\psfragTick{24}\psfragTick{25}%
\psfragTick{26}%
}
\newcommand{\psfragFixRSSLevels}[0]{%
\psfragnTick{50}\psfragnTick{60}\psfragnTick{70}\psfragnTick{80}\psfragnTick{90}%
\psfragnTick{55}\psfragnTick{65}\psfragnTick{75}\psfragnTick{85}\psfragnTick{95}%
}
\newcommand{\psfragFixDiff}[1][\tiny]{%
\psfragnTick[#1]{5.5}\psfragnTick[#1]{5}\psfragnTick[#1]{4.5}\psfragnTick[#1]{4}%
\psfragnTick[#1]{3.5}\psfragnTick[#1]{3}\psfragnTick[#1]{2.5}\psfragnTick[#1]{2}%
\psfragnTick[#1]{1.5}\psfragnTick[#1]{1}\psfragnTick[#1]{0.5}\psfragTick[#1]{0}%
\psfragTick[#1]{0.5}\psfragTick[#1]{1}\psfragTick[#1]{1.5}\psfragTick[#1]{2}%
\psfragTick[#1]{2.5}\psfragTick[#1]{3}\psfragTick[#1]{3.5}\psfragTick[#1]{4}%
\psfragTick[#1]{4.5}\psfragTick[#1]{5}\psfragTick[#1]{5.5}\psfragTick[#1]{6}%
\psfragTick[#1]{8}\psfragTick[#1]{10}\psfragTick[#1]{12}\psfragTick[#1]{14}%
\psfragTick[#1]{16}\psfragTick[#1]{18}%
}
\newcommand{\psfragFixDensity}[0]{%
\psfragTick{0.0}\psfragTick{0.2}\psfragTick{0.4}%
\psfragTick{0.6}\psfragTick{0.8}\psfragTick{1.0}%
\psfragTick{1.2}%
}
\newcommand{\psfragFixTens}[1][\scriptsize]{%
\psfragTick[#1]{0}\psfragTick[#1]{10}\psfragTick[#1]{20}%
\psfragTick[#1]{30}\psfragTick[#1]{40}\psfragTick[#1]{50}%
\psfragTick[#1]{60}\psfragTick[#1]{70}\psfragTick[#1]{80}%
\psfragTick[#1]{90}\psfragTick[#1]{100}\psfragTick[#1]{110}%
\psfragTick[#1]{120}\psfragTick[#1]{130}\psfragTick[#1]{140}%
}
\newcommand{\psfragFixRSSBarplot}[0]{%
\psfragLabel{Wireless channels}{Wireless channels}%
\psfragLabel{Received signal strength [dBm]}{Received signal strength (dBm)}%
\psfragFixChannels{}%
\psfragFixRSSLevels{}%
}
\DeclareMathOperator{\dis}{dis}
\begin{document}

\title{Key Generation in Wireless Sensor Networks\\
Based on Frequency-selective Channels --\\
Design, Implementation, and Analysis}

\author{Matthias Wilhelm, Ivan Martinovic, and Jens B.~Schmitt %
\thanks{The authors gratefully acknowledge the financial support from the
Carl-Zeiss Foundation and the Forschungsschwerpunkt Ambient Systems
(amsys).%
}\medskip{}
\\
{\normalsize disco | Distributed Computer Systems Lab }\\
{\normalsize{} TU Kaiserslautern, Germany}\\
{\normalsize \{wilhelm,martinovic,jschmitt\}@cs.uni-kl.de}}

\maketitle

\section{Introduction}

\PARstart{S}{ecuring} wireless sensor networks (WSNs) has been one
of the main wireless network research areas in recent years. Especially
key generation and key management, which are at the heart of any security
design, pose new challenges because of the low computational capabilities
of wireless motes, their limited battery lifetime, and the broadcast
nature of wireless communication. Given these peculiarities, a large
number of key management protocols for WSNs has been proposed, often
fine-tuned between different performance vs.~security trade-offs
and adapted for specific WSN scenarios and their applications (for
a general overview see, e.g., \cite{WSNKeyDistSurvey,WSNkeyManageSurvey}).
However, most of these protocols follow a conventional cryptographic
approach, where the secret is based either on pre-distributed keys
or public-key schemes assuming more performance capable devices that
are able to generate and distribute the keys. Although there have
been efforts to adapt public key cryptographic protocols to the world
of WSNs (e.g., TinyECC~\cite{tinyecc}), these~adaptations usually
have a significant complexity and memory footprint as well as a high
energy consumption~(for energy analysis of public key schemes, see,
e.g., \cite{Energy-Analysis-PKC}). As an example, TinyECC (with optimizations)
requires roughly 20\,kB of ROM and 1.7\,kB of RAM \cite{tinyecc},
which is 15.6\,\% and 42.5\,\% of the overall available memory resources
of MICAz sensor motes, respectively, and single~operations require
computation time in the order~of~seconds.

Recently, there have been research contributions that follow an alternative
path towards key generation using an information-theoretic approach
to derive secrets from unauthenticated broadcast channels. Informally,
the general idea is similar to the quantum world, in which the laws
of quantum mechanics ensure that two spatially separated particles
experience highly correlated quantum states (called {}``quantum entanglement'').
Measuring the quantum properties of one particle discloses the knowledge
of another. However, in contrast to the mystical quantum nature, contributions
on key generation using wireless channel are concerned with conventional
physical signal propagation and, to some extent, its reciprocal behavior.
Specifically, recent results described by Mathur \emph{et al.}~\cite{Telepathy}
and Azimi-Sadjadi \emph{et al.}~\cite{Envelopes} justify that the
unpredictable multipath propagation and the resulting fading behavior
of wireless channel can be used to extract shared secret material.
Simply by exchanging messages that serve to sample the signal propagation
behavior, both transmitters can establish mutual secret information,
while an eavesdropper who also receives these messages still remains
completely ignorant of their channel measurements. Since the secrecy
of the extracted information is not based on computational complexity
as common to conventional public key cryptography, these protocols
are especially valuable to computationally limited wireless devices.
Yet, existing solutions require that the wireless devices move at
certain speeds to produce enough unpredictability in their signals.
Thus, the most prevalent applications of WSNs which are based on static
wireless motes make these protocols inapplicable. This brings us to
the contribution of this work, which abstains from this limitation
and provides a novel key generation protocol for static WSNs. The
main contributions of this paper are:
\begin{itemize}
\item Design of a robust key generation protocol with an error-correcting
property against channel deviations ($\rightarrow$ Section~\ref{sec:Protocol-Design}).
\item Implementation of the protocol on static MICAz sensor motes and analysis
of the protocol's robustness and the secrecy of derived keys, especially
with respect to dependencies between wireless channels ($\rightarrow$
Section~\ref{sec:Experimental-Analysis}).
\item Derivation of a stochastic model describing the secrecy of the protocol,
its validation using experimental data, and guidelines on increasing
the number of generated secret bits ($\rightarrow$ Section~\ref{sec:Analysis}).
\end{itemize}
In summary, we demonstrate the applicability of a key generation protocol
that takes advantage of the wireless channel behavior in static wireless
networks and analyze different trade-offs between it's robustness
to channel deviations and available secrecy.

\section{Related Work}

\begin{figure*}[t]
\hspace*{\fill}\subfloat[Alice's view when Bob transmits]{\hspace*{\fill}\psfragFixRSSBarplot{}\includegraphics[width=0.31\textwidth]{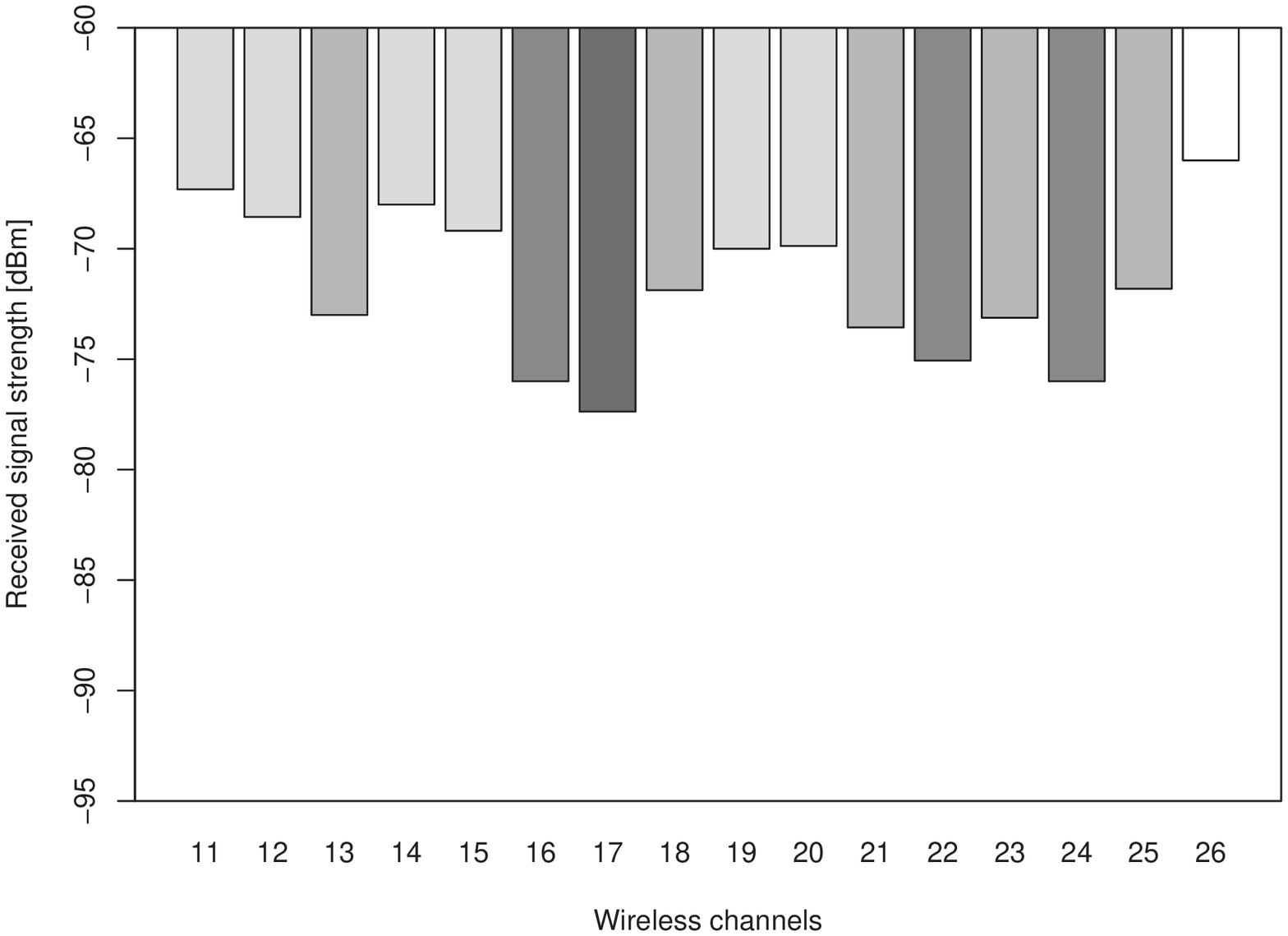}\hspace*{\fill}

}\hspace*{\fill}\subfloat[Bob's view when Alice transmits]{\hspace*{\fill}\psfragFixRSSBarplot{}\includegraphics[width=0.31\textwidth]{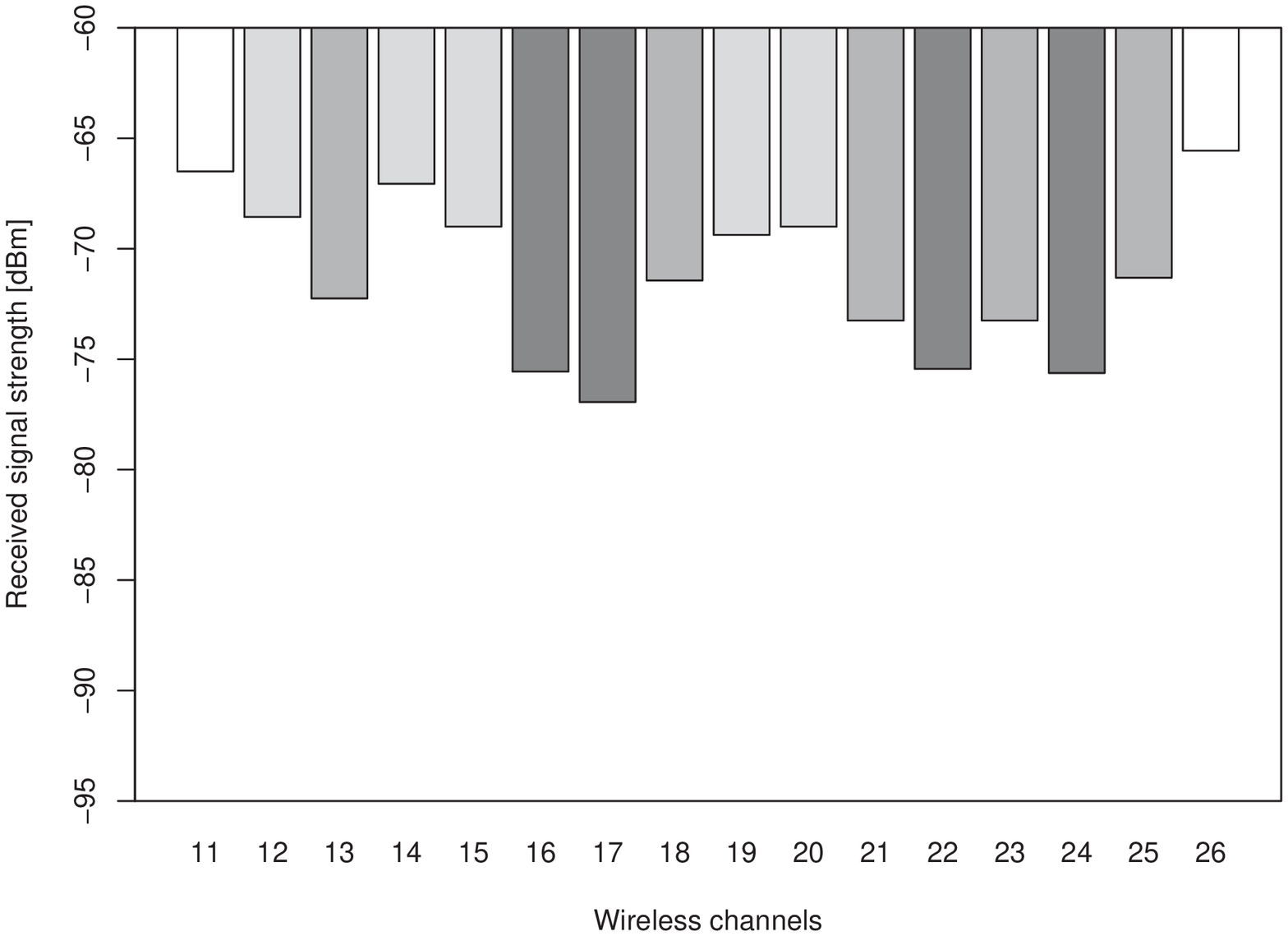}\hspace*{\fill}

}\hspace*{\fill}\subfloat[Deviations between views]{\hspace*{\fill}\psfragFixDiff{}
\psfragFixRSSBarplot{}
\psfrag{Difference between mean values [dB]}[][]{\footnotesize{Difference between means (dB)}}\includegraphics[width=0.31\textwidth]{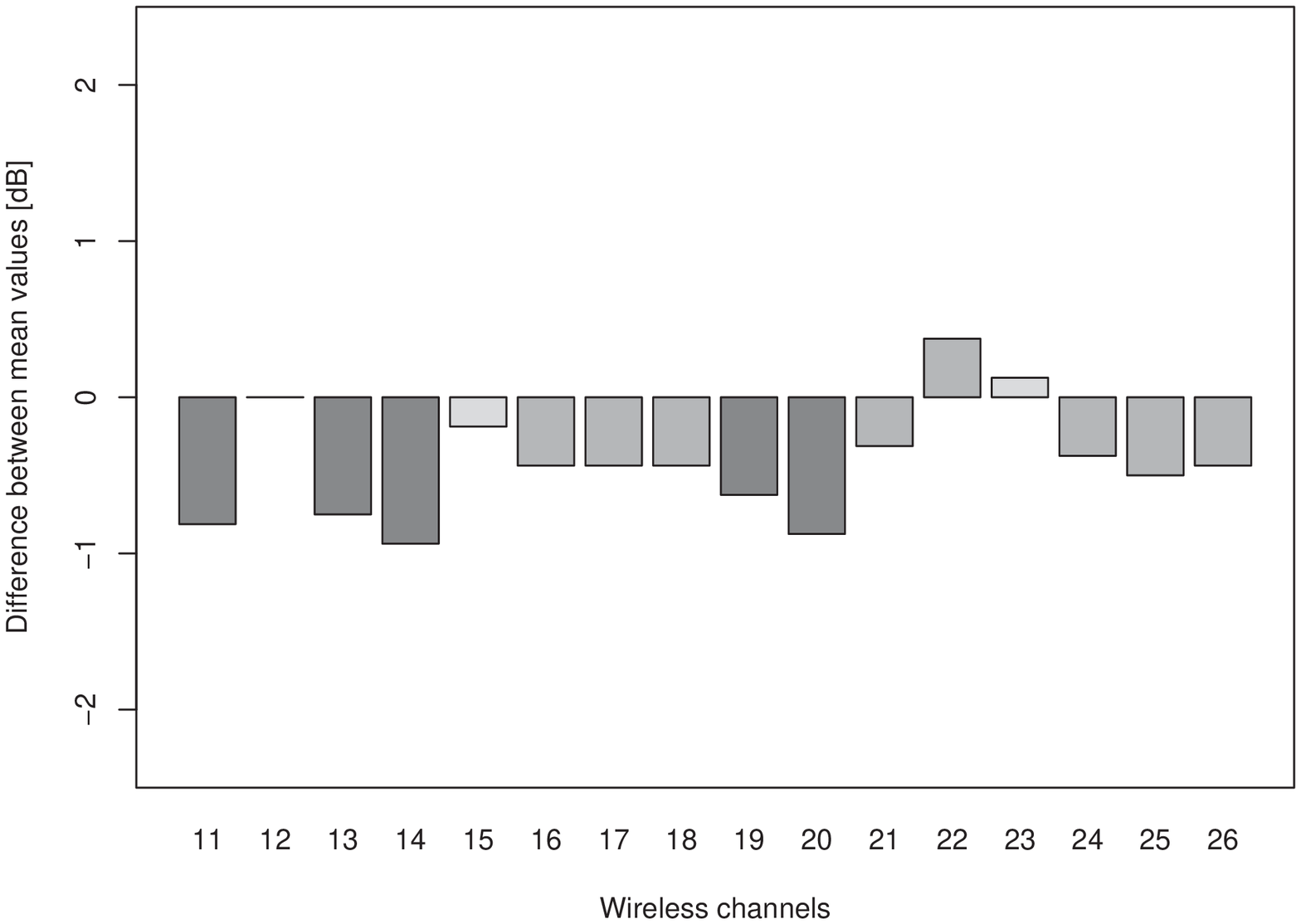}\hspace*{\fill}

}\hspace*{\fill}

\caption{The reciprocity of the wireless channel state is strong enough to
enable the extraction of shared secret information. \label{fig:Reciprocity}}

\end{figure*}

The use of physical properties was first considered in the context
of quantum cryptography. The laws of quantum mechanics ensure that
the same quantum states are observed by two spatially separated parties.
Several protocols have been proposed that exploit this property and
can guarantee the detection of eavesdroppers \cite{QuantumCoinTossing,QuantumTangledBell}.
The concept was generalized in the framework of information theory
by Maurer~\cite{MaurerKAwithCommonInfo}. Here, random sources observable
by three parties are considered: a source provides two strongly correlated
variables to two legitimate participants, and a weaker correlated
variable to an eavesdropper. This work shows that information-theoretically
secure keys can be derived from such sources even when an adversary
has partial access to the source of information. The theoretical concept
was instantiated for the use of wireless channels by the same research
group~\cite{MauWolf03,MaurerUnbreakableKeysfromNoise}. 

Several research contributions apply this concept to narrow-band communication
systems to generate secret keys from a wireless channel. Mathur \emph{et
al.}~\cite{Telepathy} use the randomness of the received signal
strength, which is introduced by movement, as a source for correlated
information, the so-called {}``radio-telepathy''. By frequent sampling
of the wireless channel both parties can create a sequence of channel
states that are strongly correlated because of the principle of reciprocity.
The fading behavior on a single sampling frequency is strongly dependent
on the physical position, and movement introduces uncertainty for
an adversary that is captured in these sequences. The degree of reciprocity
decreases rapidly in space, such that eavesdropping on sampling messages
does not allow to infer the channel state between the legitimate nodes.
The authors employ a level-crossing algorithm that uses two thresholds
for signal strength values to generate bit strings. For information
reconciliation, both parties detect mutual excursions by exchanging
suitable candidate regions in the sequence, thereby increasing the
chance to produce shared secret bits. The longer the required excursions
are, the more robust the scheme is against measurement errors. Yet,
in contrast to our work, their solution requires movement as a generator
of randomness and thus it is not applicable to static networks. Additionally,
the protocol introduced in \cite{Telepathy} requires more powerful
devices such as laptops or software-defined radios, as a high sampling
rate is necessary and a complex reconciliation mechanism is used to
avoid errors.

Azimi-Sadjadi \emph{et al.}~\cite{Envelopes} propose a similar protocol
that focuses mainly on the robustness of the key generation process,
i.e., tolerance against deviations in the wireless channel and a high
success rate. They employ a single threshold for detection of strong
deep fades introduced by movement, an event that is reliably detectable,
but also rare (in the order of Hz), again depending on the speed of
movement. Every sample is turned into an output bit of the protocol,
which leads to long sequences of {}``1''s, representing the absence
of deep fades, interrupted by shorter sequences of {}``0''s. The
resulting bit string is not directly usable as keying material, as
the uncertainty for an attacker is located in the \emph{position}
of the deep fades in the string. Thus, not all bits are equally unpredictable,
and the authors consider the use of randomness extractors to produce
uniformly random strings. No quantitative evaluation of secrecy is
given, but considering the use of deep fades only and the nature of
randomness extractors, we estimate that the use of this protocol results
in a lower secret bit rate than the approach in~\cite{Telepathy}.
Further results on such key extraction protocols, especially with
respect to the effectiveness in realistic scenarios, are given in
\cite{PatwariKeyExtraction}. 

Several other contributions use highly specialized hardware, such
as steerable antennas, ultra-wideband (UWB) radio or multi-antenna
systems with performance-capable processors \cite{EsparAntennaKGen,UWB-ChannelID,Multi-Antenna-KeyGen}.
In contrast, this paper focuses on the capabilities of conventional
{}``off-the-shelf'' sensor motes, without the need for additional
equipment.

In summary, it can be stated that current solutions provide valuable
insights into the feasibility of key generation using physical properties,
but several important issues still remain open. Especially the hardware
platform that benefits most from key generation schemes, wireless
sensor networks, is still unsupported. As current protocols require
movement and complex reconciliation to guarantee successful key generation,
the most prevalent\emph{ static scenarios} are not considered. Our
work closes this gap with a protocol that can be used even on the
most resource-constrained hardware and is specially designed for static
environments. Our initial results are presented in \cite{WMS09-1}
and \cite{WMS10-1}. This work extends our previous results and finalizes
them. It offers extensive experimental analysis using IEEE~802.15.4
technology, an in-depth evaluation of secrecy, especially with respect
to dependencies between wireless channels, and a stochastic model
that captures the behavior of the proposed protocol and provides predictions
on the different trade-offs between security and robustness.

\section{Concept \label{sec:Concept}}

In this section, we introduce the concept of key generation using
the frequency-selectivity of wireless channels. As we base the secrecy
of our protocol on our ability to extract secrets at two different
locations, we require two things from the wireless channel: strongly
correlated information between the two parties and high uncertainty
about the generated keying material for adversaries.

\subsection{Robustness Considerations}

The principle of channel reciprocity states that two receivers experience
the same properties of the wireless channel if their role as sender
and receiver is exchanged, given that the time interval is shorter
than the coherence time~$t_{c}$ of the channel. As we mainly consider
static scenarios, the reciprocity between nodes is strong, even if
the sampling rate is small owning to the limited capabilities of the
considered hardware. Measurements show that we are able to distinguish
signal strengths even when using fine-grained levels. As an example
of this behavior, Fig.~\ref{fig:Reciprocity} presents such measurements
from a single constellation of sender and receiver. On each channel,
16 sampling messages are exchanged to generate robust results. The
experiments exhibit bounded deviations, the RSS indicator reported
by the hardware is able to capture the channel state accurately enough
to enable successful key generations.

Imperfect reciprocity directly influences the robustness of the proposed
key generation protocol, as deviations in the view on the channel
lead to disagreement in the produced bit strings. A second factor
is measurement errors caused by noise, both in the measurement circuits
and the wireless channel. All of these deviations must be corrected
to successfully generate secret keys. Our experimental analysis presented
in Section~\ref{sub:Imp-Protocol-Robustness} will show that these
deviations are sufficiently small for different indoor scenarios,
and secrets can be generated reliably even on stock sensor motes.

\subsection{Security Considerations}

\begin{figure}[t]
\hspace*{\fill}\input{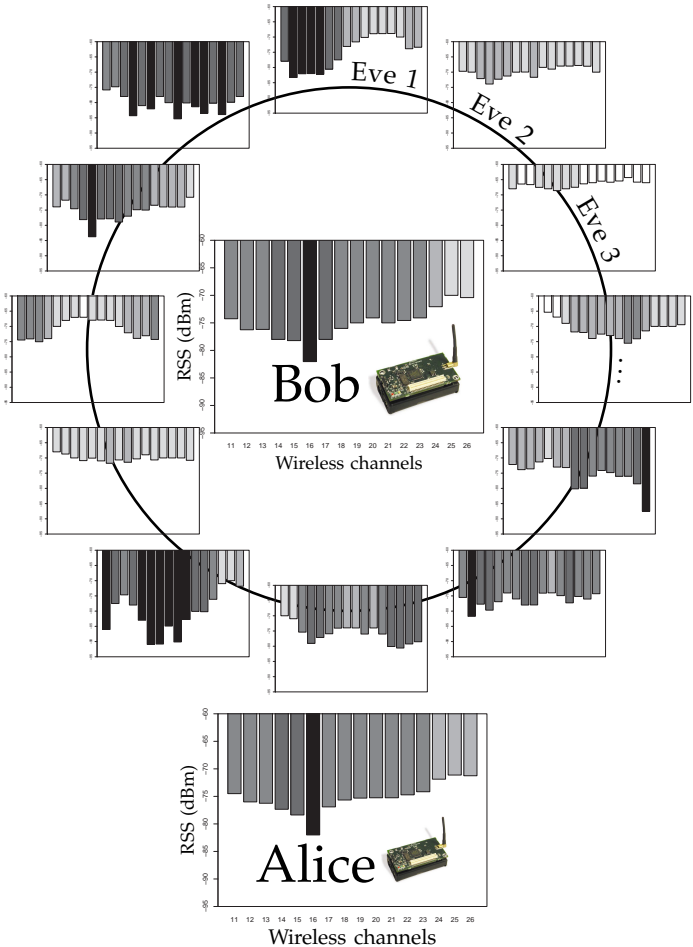}\hspace*{\fill}

\caption{Taking advantage of spatial and frequency selectivity of multipath
fading experienced in wireless channels. Even if Eve takes positions
on a circle with 10\,cm radius around the position of the legitimate
transmitter (Bob), the measured signalprints are significantly different
from Bob's measurements. \label{fig:10cm}}

\end{figure}

The unpredictability of the channel state is the most important aspect
when considering the wireless channel as a source of randomness, as
it directly affects the available secrecy. In the related work \cite{Telepathy,Envelopes},
the \emph{spatial} selectivity of the wireless channel due to movement
is used to generate secret bits. In this work, we show that the frequency-selectivity
of multipath fading is a viable alternative to generate secret information
using the wireless channel, without requiring node movement.

In general, wireless signals are not traveling on a single path from
a sender to a receiver, but arrive from several directions at the
receiver, i.e., the signal exhibits multipath propagation characteristics.
Each path is affected by different attenuations and phase shifts,
and the resulting signal at the receiver is a combination of all signal
paths by wave interference, resulting in a channel response depending
on many variables. A small variation in phase, e.g., by using a different
carrier frequency, leads to unpredictable changes in the signal strength,
even when signal paths are unchanged. This behavior is captured by
the impulse response of the wireless channel, consisting of a number
of time-shifted Dirac pulses $\delta$, and considering $L$ signal
paths \[
\mathrm{h}(\tau)=\sum_{l=1}^{L}\alpha_{l}\mathrm{e}^{j\phi_{l}}\delta\left(\tau-\tau_{l}\right),\]
with different values of each path for the amplitude $\alpha_{l}$,
phase shift $\phi_{l}$ and delay $\tau_{l}$, acting as random variables.
Because of phase shifts, interference effects can lead to signal cancellation
or amplification, depending on the relative phase shifts.

To show the magnitude of these effects, we conducted an experiment
to evaluate the selectivity of the channel both with respect to position
and carrier frequency. Fig.~\ref{fig:10cm} shows the uncertainty
of an adversary even if he is positioned very close Bob. Each barplot
represents the received signal strength measurements on 16 channels
in the 2.4\,GHz range available on the MICAz platform. The sensor
mote acting as Alice was placed in a fixed position on a desk, Bob
was placed in an adjacent room, such that both were separated by a
wall, and the channel response was sampled from 12 positions on a
10\,cm radius around Bob's initial position. The results show that
the multipath effects are strong, and even if an attacker has knowledge
of the environment and the positions of Alice and Bob; the channel
behavior is unpredictable. Even ray-tracing approaches are unable
to capture this behavior precisely, as a highly accurate model of
the environment capturing minimal phase shifts would be required.
Extensive results on the amount of uncertainty for an adversary obtained
in our experiments are given in Section~\ref{sub:Imp-Secrecy}.

\subsection{System Model}

We are interested in the \emph{amount} of uncertainty that an adversary
experiences. Information theory introduces the notion of (Shannon)
entropy to quantify the average amount of information of a discrete
random variable, making it suitable for capturing the amount of uncertainty
an attacker experiences. In this section, we derive a stochastic model
of the system enabling us to evaluate the secrecy of the proposed
protocol based on signal strength distributions of real-world measurements.

\subsubsection{Secrets from the Wireless Channel}

The state of the wireless channel for a specified frequency at a certain
point in time is captured by the \emph{discrete} random variable $C$,
that is, we assume that only finite precision can be achieved in channel
state acquisition. Possible sources for this variable are, for example,
the complex impulse response of the channel, or as in our case, the
received signal strength. The outcome of $C$ is stable during channel
coherence time, which depends on the speed of movement. In static
scenarios on which we focus this time is very long, enabling us to
take several samples and use mean values as outcomes of $C$. 

Both Alice and Bob have access to the wireless channel and can exchange
sampling messages. Each can monitor one of the random variables \begin{eqnarray*}
X_{\mathrm{Alice}} & = & C_{\mathrm{Alice}}+N_{\mathrm{Alice}},\\
X_{\mathrm{Bob}} & = & C_{\mathrm{Bob}}+N_{\mathrm{Bob}},\end{eqnarray*}
with $C_{x}$ being the measured channel state at the respective position
and $N_{x}$ being random variables representing the noise processes
that introduce errors in the channel state estimations. With the help
of channel reciprocity we can assume that $C_{\mathrm{Alice}}=C_{\mathrm{Bob}}=C$,
i.e., both parties experience the same channel properties in their
exchanged sampling messages. The mutual information that the channel
provides is described by \[
\mathrm{I}\left(X_{\mathrm{Alice}},X_{\mathrm{Bob}}\right)=\mathrm{H}\left(X_{\mathrm{Alice}}\right)-\mathrm{H}\left(X_{\mathrm{Alice}}\middle|X_{\mathrm{Bob}}\right)\leq\mathrm{H}\left(C\right).\]
The conditional entropy $\mathrm{H}\left(X_{\mathrm{Alice}}\middle|X_{\mathrm{Bob}}\right)$
is zero if the channel is noiseless, and then the amount of shared
information which Alice and Bob gain from monitoring the wireless
channel is quantified by the entropy $\mathrm{H}\left(C\right)$ of
the channel state variable, given by \[
\mathrm{H}\left(C\right)=-\sum_{c\in\mathcal{C}}\mathrm{p}\left(c\right)\log\mathrm{p}\left(c\right),\]
where $\mathrm{p}\left(c\right)$ denotes the probability mass function
of $C$ and $\mathcal{C}$ its support. This also represents the maximum
attainable mutual information from the wireless channel, because the
noise term $N=N_{\mathrm{Alice}}-N_{\mathrm{Bob}}$ that captures
deviations in the measurements has a negative effect on the mutual
information \cite{MaurerUnbreakableKeysfromNoise}. We propose a reconciliation
mechanism to correct the errors introduced at this point, which is
presented in the next section. An experimental evaluation of the magnitude
of measurement errors and the effects on secrecy is given in Section~\ref{sec:Experimental-Analysis},
as we aim to quantify the amount of secrecy using the propagation
properties of realistic wireless channels.

An eavesdropper who can also monitor the sampling message to infer
the channel state $C$ between Alice and Bob measures $X_{\mathrm{Eve}}=C_{\mathrm{Eve}}+N_{\mathrm{Eve}}$.
As $C$ and $C_{\mathrm{Eve}}$ de-correlate rapidly in space, as
shown empirically by Mathur \emph{et al.}~in \cite{Telepathy}, the
mutual information $\mathrm{I}\left(X_{\mathrm{Alice}},X_{\mathrm{Eve}}\right)$
and $\mathrm{I}\left(X_{\mathrm{Bob}},X_{\mathrm{Eve}}\right)$ are
approaching zero if the distance is greater than a wavelength, thus
eavesdropping on the sampling messages does not help Eve to infer
information on $C$. The entropy $\mathrm{H}\left(C\right)$ stands
against Eve, it quantifies the amount of uncertainty in the channel
state for Eve accurately. 

However, the information on a single channel is limited, and a way
must be identified to increase the amount of shared information between
Alice and Bob. Two possibilities of increasing entropy can be considered:
\emph{(i)} create a random process $C\left(t\right)$ by moving the
devices (reducing the channel coherence time), or \emph{(ii)} probe
multiple channels to exploit the frequency-selectivity of the wireless
channel. The first approach is followed in \cite{Telepathy,Envelopes},
which is effective and easy to analyze for its secrecy but, as pointed
out, poses several problems for an adoption in WSNs. To support static
networks, we propose and evaluate the second approach in this work.

\subsubsection{Multiple Channels}

We now consider the random vector $\mathbf{C}=\left(C_{1},\ldots,C_{n}\right)$,
measured on $n$ different frequencies (channels). In this case, Alice
measures $\mathbf{X}_{\mathrm{Alice}}=\left(X_{\mathrm{Alice}}^{(1)},\ldots,X_{\mathrm{Alice}}^{(n)}\right)$
and Bob measures the corresponding vector $\mathbf{X}_{\mathrm{Bob}}$,
which both can be used to obtain the mutual information \[
\mathrm{I}\left(\mathbf{X}_{\mathrm{Alice}},\mathbf{X}_{\mathrm{Bob}}\right)=\mathrm{H}\left(\mathbf{X}_{\mathrm{Alice}}\right)-\mathrm{H}\left(\mathbf{X}_{\mathrm{Alice}}\middle|\mathbf{X}_{\mathrm{Bob}}\right)\leq\mathrm{H}\left(\mathbf{C}\right),\]
assuming reciprocity on all channels and $\mathrm{H}\left(\mathbf{C}\right)$
being the \emph{joint} entropy over all channels, given by \[
\mathrm{H}\left(\mathbf{C}\right)=-\sum_{c_{j}\in\mathcal{C}_{i}}\mathrm{p}\left(c_{1},\ldots c_{n}\right)\log\mathrm{p}\left(c_{1},\ldots,c_{n}\right).\]
If the elements in the random vector are independent, then the amount
of uncertainty can directly be evaluated using the entropy values
from individual channels, $\mathrm{H}\left(\mathbf{C}\right)=\sum_{i=1}^{n}\mathrm{H}\left(C_{i}\right)$.
This value represents an upper bound on the joint entropy, as known
dependencies between the variables enable predictions and reduce the
overall uncertainty of Eve. Wireless channels experience correlated
fading if the distance between the center frequencies is smaller than
the coherence bandwidth. This is the case for our hardware platform,
MICAz sensor motes. If it were not the case, the secrecy analysis
would be fairly easy. Yet, in Section~\ref{sec:Experimental-Analysis}
we will not make that simplifying assumption. Therefore, we analyze
the dependency structure to evaluate the amount of uncertainty, i.e.,
the secrecy of keys generated by the presented protocol. We do this
with respect to the following adversarial model.

\subsubsection{Adversarial Model}

One important aspect for the quantification of secrecy of such a scheme
is to define the abilities of an adversary, in the same~way as it
is necessary when evaluating cryptographic security protocols. For
instance, computationally~unbounded~attackers can break Diffie-Hellman
key agreements with ease because they can solve any problem that relies
on computational complexity. Similarly, an attacker who can take \emph{exactly}
the same physical~positions as legitimate sensor~nodes can break
our key generation~protocol. Yet, with realistic constraints on an
attacker, the security of the protocol can be analyzed quantitatively.

An adversary has several options to attack the secrecy of the key
generation protocol. It can eavesdrop on the wireless~channel and
observe both the content of the messages and the signal strengths
that it can experience at its position. As the content of the messages
carries no~information and the signal strength de-correlates rapidly
in space, this gives it very little information on the channel~state
between Alice and Bob. Thus, eavesdropping is not an effective option.
With its presence, it can only prevent Alice and Bob from exchanging
secret information in plaintext over the wireless~channel. 

The best~attack~vector is to model the multipath~channel between
Alice and Bob, taking into consideration the hardware and environment,
and then infer the signal strength values. Knowledge to aid an attacker
in this modeling can come from plans of the building for indoor scenarios
or from observations of the environment, from the positions of Alice
and Bob by observation of the sensor motes, or via positioning~methods
using wireless signals, such as triangulation. While the effects of
path loss and shadowing on the line-of-sight (LOS) connection between
the two nodes are predictable (e.g., using ray-tracing methods~\cite{WISEraytracing}),
the resolution of the multipath components is very challenging. To
refine its model, an adversary is allowed to do measurements with
similar hardware off-site. The only assumption here is that the attacker
cannot measure at the very same positions of the legitimate sensors
during operation, because this is equivalent to a node capture which
discloses the key directly.

Given this information, we can model the knowledge of an adversary
by limiting possible signal strengths to the distribution of signal
strengths of similar positions. This can be achieved by using the
distribution of signal strength values from channel propagation models,
that is, he can generate accurate distributions for $C_{i}$ between
Alice and Bob. This allows quantifying the amount of uncertainty that
the attacker experiences; we can quantify its expected uncertainty
with the entropy $\mathrm{H}\left(\mathbf{C}\right)$ of the signal
strength distributions of the wireless channel.

\section{Protocol Design \label{sec:Protocol-Design} }

\begin{figure*}
\hspace*{\fill}\psset{xunit=.95\textwidth,yunit=7.7cm,runit=1cm}
\begin{pspicture}(0,-.05)(1,1)


\rput[bl](.375,.95){\large{Alice}}
\rput[bl](.575,.95){\large{Bob}}

\rput[bl](0,.925){Sampling Phase}
\rput[bl](0,.55){Key Generation Phase}
\rput[bl](0,.195){Key Verification Phase}

\psset{linestyle=dashed}
\psframe(0,.65)(1,.925)
\psframe(0,.275)(1,.55)
\psframe(0,0)(1,.195)
\psset{linestyle=solid}

\psset{linewidth=2pt}
\psline(.4,.94)(.4,.65)
\psline(.6,.94)(.6,.65)
\psline(.4,.55)(.4,.275)
\psline(.6,.55)(.6,.275)
\psline(.4,.195)(.4,0)
\psline(.6,.195)(.6,0)
\psset{linewidth=1pt}

\psline[]{->}(.4,.87)(.6,.87)
\rput[bc](.5,.895){switchChannel()}
\psline[]{->}(.4,.8)(.6,.8)
\rput[bc](.5,.825){sampleChannel()}
\psline[]{<-}(.4,.725)(.6,.725)
\rput[bc](.5,.75){sampleChannel()}
\psline[]{->}(.4,.4)(.6,.4)
\rput[bc](.5,.425){$\left(\mathbf{T},\mathbf{P}\right)$}
\psline[]{<-}(.4,.135)(.6,.135)
\rput[bc](.5,.16){$\textrm{h}(secret^\prime)$}
\psline[]{->}(.4,.025)(.6,.025)
\rput[bc](.5,.05){success()}

\psline[]{->}(.4,.64)(.4,.56)
\rput[br](.39,.575){$\boldsymbol{\mu}$}
\psline[]{->}(.6,.64)(.6,.56)
\rput[bl](.61,.575){$\boldsymbol{\mu}^{\prime}$}

\psline[]{->}(.4,.27)(.4,.2)
\rput[br](.39,.225){$secret$}
\psline[]{->}(.6,.27)(.6,.2)
\rput[bl](.61,.225){$secret^{\prime}$}

\rput[br](.39,.725){$m_i^{(j)}=\mathrm{RSS}(sample)$}
\rput[br](.39,.65){$\mu_i=\frac{1}{k} \sum_{j=1}^k m_i^{(j)}$}
\rput[bl](.61,.775){$m_i^{\prime (j)}=\mathrm{RSS}(sample)$}
\rput[bl](.61,.65){$\mu_i^{\prime}=\frac{1}{k} \sum_{j=1}^k m_i^{\prime (j)}$}
\psline[]{<->}(.025,.675)(.025,.9)
\psline[]{<->}(.975,.675)(.975,.9)
\rput[bc]{90}(.015,.7875){\scriptsize{$i=1,...,n$}}

\psline[]{<->}(.075,.85)(.075,.72)
\psline[]{<->}(.925,.85)(.925,.72)
\rput[bl]{90}(.07,.705){\scriptsize{$j=1,...,k$}}
\psline[linewidth=.5pt,linestyle=dotted](.075,.72)(.925,.72)
\psline[linewidth=.5pt,linestyle=dotted](.075,.85)(.925,.85)

\rput[bl](.025,.48){$t_i=\mathrm{chooseTolerance}\left(\mathbf{m}_i,errors\right)$}
\rput[bl](.025,.43){$q_i=\mathrm{q}_{t_i}(\mu_i)$}
\rput[bl](.19,.43){$P_i=q_i-\mu_i$}
\rput[bl](.025,.35){$\mathbf{P}=\left(P_1,\ldots,P_n\right)\quad\mathbf{T}=\left(t_1,\ldots,t_n\right)$}
\rput[br](.39,.28){$secret=\mathrm{bin}(q_1)\mid\mid\cdots\mid\mid\mathrm{bin}(q_n)$}
\rput[bl](.65,.47){$t_i=\mathrm{getTolerance}\left(\mathbf{T}\right)$}
\rput[bl](.65,.41){$P_i=\mathrm{getReconcileToken}\left(\mathbf{P}\right)$}
\rput[bl](.65,.35){$q_i^\prime =\mathrm{q}_{t_i}\left(\mu_i^\prime +P_i\right)$}
\rput[bl](.61,.28){$secret^\prime =\mathrm{bin}(q_1^\prime )\mid\mid\cdots\mid\mid\mathrm{bin}(q_n^\prime)$}
\psline[]{<->}(.015,.425)(.015,.53)
\rput[bl]{90}(.012,.42){\tiny{$i=1..k$}}
\psline[linewidth=.5pt,linestyle=dotted](.015,.423)(.4,.423)
\psline[linewidth=.5pt,linestyle=dotted](.015,.53)(.4,.53)

\psline[]{<->}(.95,.35)(.95,.525)
\rput[tl]{90}(.955,.35){\scriptsize{$i=1,...,k$}}
\psline[linewidth=.5pt,linestyle=dotted](.6,.35)(.95,.35)
\psline[linewidth=.5pt,linestyle=dotted](.6,.525)(.95,.525)

\rput[br](.39,.1215){$\mathrm{h}\left(secret\right)==\mathrm{h}\left(secret^\prime\right)?$}
\rput[br](.39,.07){False: Choose new tolerances $t_i$}
\rput[br](.39,.01){True: Key verified.}
\psline[](.05,.1)(-.01,.1)
\psline[](-.01,.1)(-.01,.525)
\psline[]{->}(-.01,.525)(-.002,.525)
\rput[bc]{90}(-.025,.3){$errors$\,++}

\end{pspicture}\hspace*{\fill}

\caption{Key generation protocol. The protocol operates in three phases; \emph{(i)}
the acquisition of channel state estimates, \emph{(ii)} error correction
using multi-level quantization and \emph{(iii)} secret verification.
The channel state estimates can be reused if the chosen tolerance
values are too small for the experienced deviations. \label{fig:Protocol}}

\end{figure*}
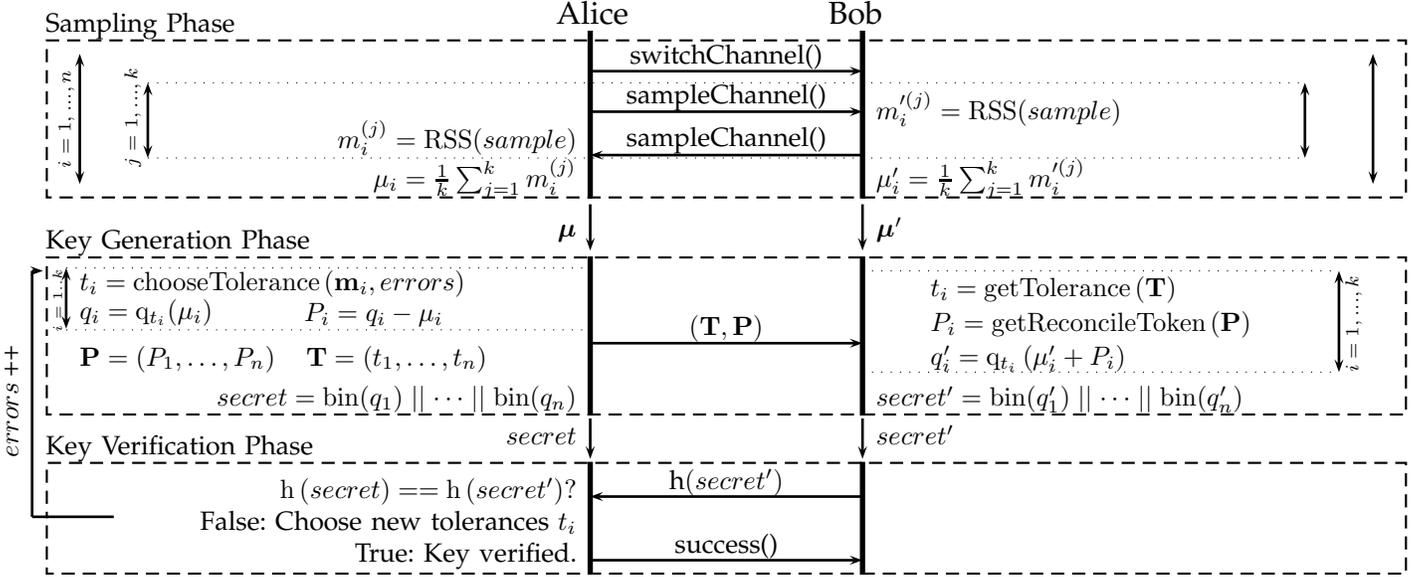

In this section, we present our novel key generation protocol suitable
even for limited hardware capabilities by using a performance-aware
design, specifically with WSNs in mind.

In the following, we conduct measurements by sampling RSS~values
on a set of $n$ different frequencies~$\mathcal{F}=\left\{ f_{1},\ldots,f_{n}\right\} $
(also referred to as \emph{channels}). The number of samples taken
is $k$, i.e., for each channel $f_{i}$ we collect a set of measurements
$\mathbf{m}_{i}=\left\{ m_{i}^{(1)},\ldots,m_{i}^{(k)}\right\} $.
To increase the error tolerance of our scheme, we calculate the mean
value $\mu_{i}=\frac{1}{k}\sum_{j=1}^{k}m_{i}^{(j)}$ of these RSS
samples. We view this mean value as the random variable~$C_{i}$,
which is distributed depending on the characteristics of wireless
propagation, e.g., following the commonly assumed Rayleigh or Ricean
distributions. The means of all $n$~channels are combined to the
random vector $\mathbf{C}=\left(C_{1},\ldots,C_{n}\right)$. A realization,
the outcome of our measurements is $\boldsymbol{\mu}=\left(\mu_{1},\ldots,\mu_{n}\right)$,
with $\mu_{i}\in\mathcal{M}=\left[\mu_{\mathrm{min}},\mu_{\mathrm{max}}\right]$,
the range of signal strength values that can be measured by the hardware
platform. We assume that only a finite precision in the measurements
can be achieved. As an example, in our wireless sensor network testbeds
we used $\mathcal{M}=\left[\textrm{-}104,\textrm{-}40\right]$\,dBm,
with a precision depending on the number of samples taken, since each
RSS~sample is integer~valued. We associate $\mathcal{M}$ with the
distance~function $\dis:\mathcal{M}\times\mathcal{M}\rightarrow\mathbb{R}^{+}$
defined as $\dis\left(\mu_{i},\mu_{i}^{\prime}\right):=\left|\mu_{i}-\mu_{i}^{\prime}\right|$,
which is the difference in~dB in our case. Thus, $\mathcal{M}$ together
with this distance~function constitutes a \emph{metric~space, }a
necessary prerequisite for the discussion of our error correction
scheme.

\subsection{Multi-level Quantization}

To successfully repair deviations in channel state measurements between
Alice and Bob, we use multi-level quantization to make close measurements
equal. In general, our quantization scheme~$\mathcal{Q}$ uses a
subset of the metric~space~$\mathcal{M}$, $\mathcal{Q}=\left\{ q_{1},\ldots,q_{K}\right\} \subseteq\mathcal{M}$,
with a total of $K$~elements, the \emph{quantization levels}. The
most important~property of the quantization scheme is the \emph{tolerance~}$t$
of the quantization~$\mathcal{Q}$. This is the largest~distance
for which an $m\in\mathcal{M}$ is mapped~uniquely, i.e., for all
$\mu_{i}\in\mathcal{M}$, we have $\dis\left(\mu_{i},q\right)<t$
for at most one~$q\in\mathcal{Q}$. Therefore, all values~$\mu_{i},\mu_{i}^{\prime}$
are mapped to~$q$ given their distance to $q$ is small~enough.

\subsubsection{Construction}

We choose $K$~elements~of~$\mathcal{M}$ with the same~distance~$d$
between quantization levels, where $p=\left\lceil \log_{2}K\right\rceil $
is the number~of~bits that are needed to identify a level. This
equidistance ensures that the tolerance $t$ is the same for all values
in $\mathcal{M}$. We denote this quantization as $\mathcal{Q}_{t}=\left\{ q_{1},\ldots,q_{K}\right\} $,
the bijective mapping to the binary~representation as $\mathrm{bin}:\mathcal{Q}_{t}\rightarrow\left\{ 0,1\right\} ^{p}$,
which maps quantized values to binary~strings. Since $\mu_{\mathrm{min}}$~and~$\mu_{\mathrm{max}}$
are both fixed~values, the distance~$d$ between neighboring quantization
levels is reduced as the number~of~levels increases. The relation
is given by $d=\frac{\left|\mu_{\mathrm{max}}-\mu_{\mathrm{min}}\right|}{K}$.
The tolerance of this scheme is given by~$t=\frac{d}{2}$, since
all~levels are evenly~spaced. The number of levels therefore directly
affects the tolerance of the quantization scheme, therefore, when
fewer levels are considered, larger deviations can be repaired. The
process~of~quantization maps the value~$\mu$ to the levels~$q$
with a minimal~distance in~$\mathbb{R}$, formally \[
\textrm{q}_{t}\left(\mu\right)=\arg\min_{q\in\mathcal{Q}_{t}}\dis\left(\mu,q\right).\]
For example, consider the quantization scheme \[
\mathcal{Q}_{1}=\left\{ \textrm{-}104,\textrm{-}102,\ldots,\textrm{-}42,\textrm{-}40\right\} \]
 with $32$~levels and tolerance~$t=1$ for our metric~space~$\mathcal{M}$.
For this, the measured value~$\mu=\textrm{-}71.424$\,dBm is quantized
to the level~$q=\textrm{-}72$. This ensures that values with distances
smaller than $1$\,dB are mapped to equal levels.

\subsubsection{Tolerance Properties of the Quantization Scheme}

The amount of uncertainty is reduced in this process as several values
are mapped to the same quantization level, but at the same time the
tolerance for deviations is increased. Thus, we can trade between
robustness and secrecy by choosing a $\mathcal{Q}_{t}$ with a suitable
tolerance~$t\in\mathbb{R}$ that is able to correct~errors in measurements
given $\dis\left(\mu,\mu^{\prime}\right)<t$. 

Still, some constellations are possible, such that $\mu$~and~$\mu^{\prime}$
are mapped to two different~levels (e.g., given $\mathcal{Q}_{1}$,
$\mu=\textrm{-}70.9$\,dBm and $\mu^{\prime}=\textrm{-}71.1$\,dBm
are mapped to $\textrm{-}70$ and $\textrm{-}72$, respectively).
To correct these error~patterns, we need to send a public piece of
information $P$ that helps Bob to reconcile his measurement and recover
the same~quantization results as Alice. Of course, at the same time
$P$ should reveal no new information to Eve. 

Our construction is straightforward: Alice calculates $P=\textrm{q}_{t}\left(\mu\right)-\mu$,
the shift that is necessary from $\mu$ to the corresponding quantization
value $q=\textrm{q}_{t}\left(\mu\right)$, and uses $q$ as her secret
information. This shift is always smaller than or equal to~$t$,
and therefore reveals only information that is discarded by Alice
and Bob anyway due to the quantization property. Alice then sends
$P$ via public~channel to Bob, who uses $P$ to generate the same
level~$q$ using his measurement~$\mu^{\prime}$ by calculating
$q=\textrm{q}_{t}\left(\mu^{\prime}+P\right)$. 
\begin{claim}
By using this reconciliation scheme, both Alice and Bob obtain $q$,
given $\dis\left(\mu,\mu^{\prime}\right)<t$.\end{claim}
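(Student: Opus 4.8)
The plan is to reduce the claim to a single statement about Bob's computation: Alice obtains $q$ trivially, since she \emph{defines} her secret level to be $q = \mathrm{q}_{t}(\mu)$, so the entire content lies in showing that Bob's output $\mathrm{q}_{t}(\mu'+P)$ equals this same $q$ whenever $\dis(\mu,\mu') < t$.

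First I would substitute Alice's reconciliation token $P = \mathrm{q}_{t}(\mu) - \mu = q - \mu$ into Bob's expression, which gives $\mu' + P = q + (\mu'-\mu)$. Writing $\epsilon := \mu'-\mu$ for the measurement deviation, the hypothesis $\dis(\mu,\mu') < t$ becomes $|\epsilon| < t$, and Bob's input to the quantizer is the shifted point $q + \epsilon$, which lies within distance $|\epsilon| < t$ of the genuine level $q$. So the claim is equivalent to $\mathrm{q}_{t}(q+\epsilon) = q$.

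The key step is then to establish this equality, for which I would invoke the defining tolerance property of $\mathcal{Q}_{t}$ stated earlier: for every point of $\mathcal{M}$ there is at most one level within distance $t$. Since $\dis(q+\epsilon,q) = |\epsilon| < t$, the level $q$ is such a level, and by the tolerance property it is the only one; hence every other level $q' \neq q$ satisfies $\dis(q+\epsilon,q') \geq t > |\epsilon| = \dis(q+\epsilon,q)$, so $q$ is the unique minimizer and $\mathrm{q}_{t}(q+\epsilon) = \arg\min_{q' \in \mathcal{Q}_{t}} \dis(q+\epsilon,q') = q$. To keep the bound self-contained one can argue directly from the equidistant construction instead: consecutive levels are spaced by $d = 2t$, so any level other than $q$ is at distance at least $2t$ from $q$, and by the triangle inequality at distance at least $2t - |\epsilon| > t > |\epsilon|$ from $q+\epsilon$, giving the same conclusion.

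The main---and essentially only---obstacle is guaranteeing that all the relevant inequalities stay \emph{strict}, so that the $\arg\min$ is genuinely unique rather than tied. This is precisely where the strictness of the hypothesis $|\epsilon| < t$ (as opposed to $|\epsilon| \le t$) is used: it rules out the degenerate case in which $q+\epsilon$ lands exactly on a midpoint $q \pm t$ between two levels and the quantizer's choice would be ambiguous. Boundary levels $q_{1}$ and $q_{K}$ require no special care, since having fewer neighboring levels only increases the distance to competing levels and therefore cannot create a tie.
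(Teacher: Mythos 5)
Your proof is correct and follows essentially the same route as the paper's: substitute $P = q - \mu$ so that Bob quantizes $q + (\mu'-\mu)$, a point within distance $t$ of the level $q$ (the paper phrases this as shift-invariance of $\dis$), and then invoke the tolerance property to conclude $\mathrm{q}_{t}(\mu'+P) = q$. Your treatment is somewhat more careful than the paper's, which simply asserts the final uniqueness step, whereas you justify explicitly why $q$ is the strict $\arg\min$ (via the at-most-one-level-within-$t$ property, or directly from the $d = 2t$ spacing and the triangle inequality) and note where strictness of $|\mu'-\mu| < t$ is needed.
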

\begin{proof}
Considering $\dis\left(\mu,\mu^{\prime}\right)<t$, then the distance
between the mean values is unchanged when both sides are shifted by
$P$, i.e., $\dis\left(\mu+P,\mu^{\prime}+P\right)<t$. From the construction
of $P$, we can infer that $q=\textrm{q}_{t}\left(\mu\right)=\mu+P$,
and thus $\dis\left(q,\mu^{\prime}+P\right)<t$. Finally, as the quantization
distance of the used scheme is $t$, $\mu^{\prime}+P$ is uniquely
mapped to $q$ by Bob as well, $\textrm{q}_{t}\left(\mu^{\prime}+P\right)=q$. 
\end{proof}

\subsection{Key Generation Protocol}

The proposed key generation protocol operates in three phases. In
the \emph{sampling phase}, the channel state is acquired, and due
to the reciprocity of the wireless channel state~information strongly
correlated measurements are collected by the two legitimate parties
in the protocol. In the \emph{key generation phase}, these deviations
are corrected, resulting in a secret bit string that is guaranteed
to be equal if the experienced deviations are bounded and suitable
quantization levels are chosen. The \emph{key verification phase}
ensures correct key agreement. The complete~protocol is shown in
Fig.~\ref{fig:Protocol}. We used a straightforward protocol for
the ease of presentation of the protocol analysis, but we also experimented
with several protocol optimizations that can further increase the
robustness and secrecy of the protocol, as presented in Section~\ref{sub:Protocol-Optimizations}.

\subsubsection{Sampling Phase}

In this initial phase, Alice and Bob exchange sampling messages over
the set of available wireless channels. For each of the $n$ frequencies
in $\mathcal{F}$, Alice and Bob exchange $k$ messages and each one
stores a set of measured RSS values $m_{i}$ or $m_{i}^{\prime}$,
respectively. Alice initiates the message exchanges, Bob answers incoming
sampling messages as fast as possible for a maximum of channel reciprocity.
Due to constraints of the mote hardware, the samples must be collected
in an interleaved manner, such that the state of the wireless channel
can change slightly, contributing to the noise terms $N_{\mathrm{Alice}}$
and $N_{\mathrm{Bob}}$. However, by using several sampling messages
per channel, the adverse effects of such short term deviations can
be mitigated. The mean values $\mu_{i}=\frac{1}{k}\sum_{j=1}^{k}m_{i}^{(j)}$
are then computed by Alice, while Bob proceeds similarly with $\mu_{i}^{\prime}$.
Thus, after finishing the sampling phase, both Alice and Bob possess
the vectors of channel state information $\boldsymbol{\mu}$ and $\boldsymbol{\mu}^{\prime}$
that capture the fading behavior of the wireless channel.

\subsubsection{Key Generation Phase}

The gathered mean value vectors $\boldsymbol{\mu}$ and $\boldsymbol{\mu}^{\prime}$
contain secret information that can be used as secret keys, but after
the sampling phase these vectors are unlikely to agree. The \emph{key
generation phase} uses information reconciliation based on the introduced
error correction scheme to produce a bit string that is equal on both
sides, without discarding shared bits or revealing information to
eavesdroppers. Alice chooses a set of tolerance values $\mathbf{T}=\left(t_{1},\ldots,t_{n}\right)$
based on the variance of its RSS values $\boldsymbol{m}_{i}$ and
the number of experienced verification errors from potential previous
runs. We used the same starting tolerance value $t_{i}=1$ for all
channels in our experiments and analysis, which achieves a high rate
of successful key agreements as well as good secrecy, as shown experimentally
with our implementation. However, the choice of tolerance values strongly
influences the robustness and secrecy trade-off, and considering optimization
at this point is useful (see a corresponding discussion in Section~\ref{sub:Protocol-Optimizations}). 

Alice uses the tolerance values to instantiate the appropriate quantization
functions $\textrm{q}_{t_{i}}$ and applies them on her mean values
$\mu_{i}$ to generate the values $q_{i}$ for each channel. She also
generates the vector of public reconciliation strings $\mathbf{P}=\left(P_{1},\ldots,P_{n}\right)$
by calculating $P_{i}=q_{i}-\mu_{i}$ to aid Bob in his error correction
and to ensure matching secrets. He can then generate his quantization
level vector by calculating $q_{i}^{\prime}=\textrm{q}_{t}\left(\mu_{i}^{\prime}+P_{i}\right)$.
Both parties now have sufficient information to generate their candidate
secrets $secret$ and $secret^{\prime}$ by concatenating the resulting
binary strings.

\subsubsection{Key Verification Phase }

Finally, both parties proceed to verify if the secret keys are generated
successfully, i.e., if a mutual secret is established. After Bob has
finished his computations, he sends the hash value $\mathrm{h}\left(secret^{\prime}\right)$
of his secret string to Alice. Alice ensures successful key generation
by comparing Bob's value to her secret string. If the hash values
do not match, Alice can retry the key generation by increasing the
error count and choosing new tolerance values in the key generation
phase; redoing the sampling of the wireless channel is not necessary.
The approach used in our implementation uses a tolerance increase
of 0.5\,dB on each channel. However, our implementation on MICAz
sensor motes presented in the next section shows that with a tolerance
$t=1$, key agreement was reached in 94.6\,\% of the cases on the
first try. 

After finishing this step, both Alice and Bob share a secret key that
can be used to support security services.

\subsection{Protocol Optimizations\label{sub:Protocol-Optimizations}}

We experimented with some optimizations to increase the robustness
and secrecy of our protocol, and discuss some options in this section.
The later sections, however, base their analysis on the protocol described
in the previous section. 

The function \emph{chooseTolerance} can be improved further when the
tolerance values $t_{i}$ are chosen independently for each channel.
Our experiments show that only one or two channels have deviations
larger than the used tolerance values, and therefore prevent a successful
key generation. By choosing a higher tolerance value for single channels
only, Alice can start several key verification phases until the mismatching
channel is identified.

Our experiments show that the deviations can be approximated well
by a Normal distribution. This enables us to predict the success probability
of a protocol run, that can be used by Alice to aggressively choose
low tolerances in the beginning to increase the entropy of secret
strings, e.g., by initially achieving only a 56\,\% chance of a successful
key agreement with a tolerance value $t=0.4$.

\section{\label{sec:Experimental-Analysis}Implementation Results}

After the definition of the key generation protocol, the next interesting
aspect is how this protocol performs in real-world environments, and
how large the achievable secrecy and robustness is given realistic
propagation properties. With several experiments, these properties
are explored in detail in this section. We also show that the concept
is applicable on resource-constrained devices under realistic properties
of the wireless channel. The first part is focused on the robustness
and performance of the protocol, and in the second part the secrecy
is quantified empirically using the notion of information entropy.
These insights are also used as a basis and justification for the
analytical model, developed in Section~\ref{sec:Analysis}.

\begin{figure*}
\hspace*{\fill}\subfloat[Distribution of errors for the LOS experiment \label{fig:Distribution-of-errors-LOS}]{\hspace*{\fill}\psfragLabel{Differences between mean values [dB]}{Differences between mean values (dB)}
\psfragLabel{Density}{Density}

\psfragFixDiff{}
\psfragFixDensity{}\includegraphics[width=0.31\textwidth]{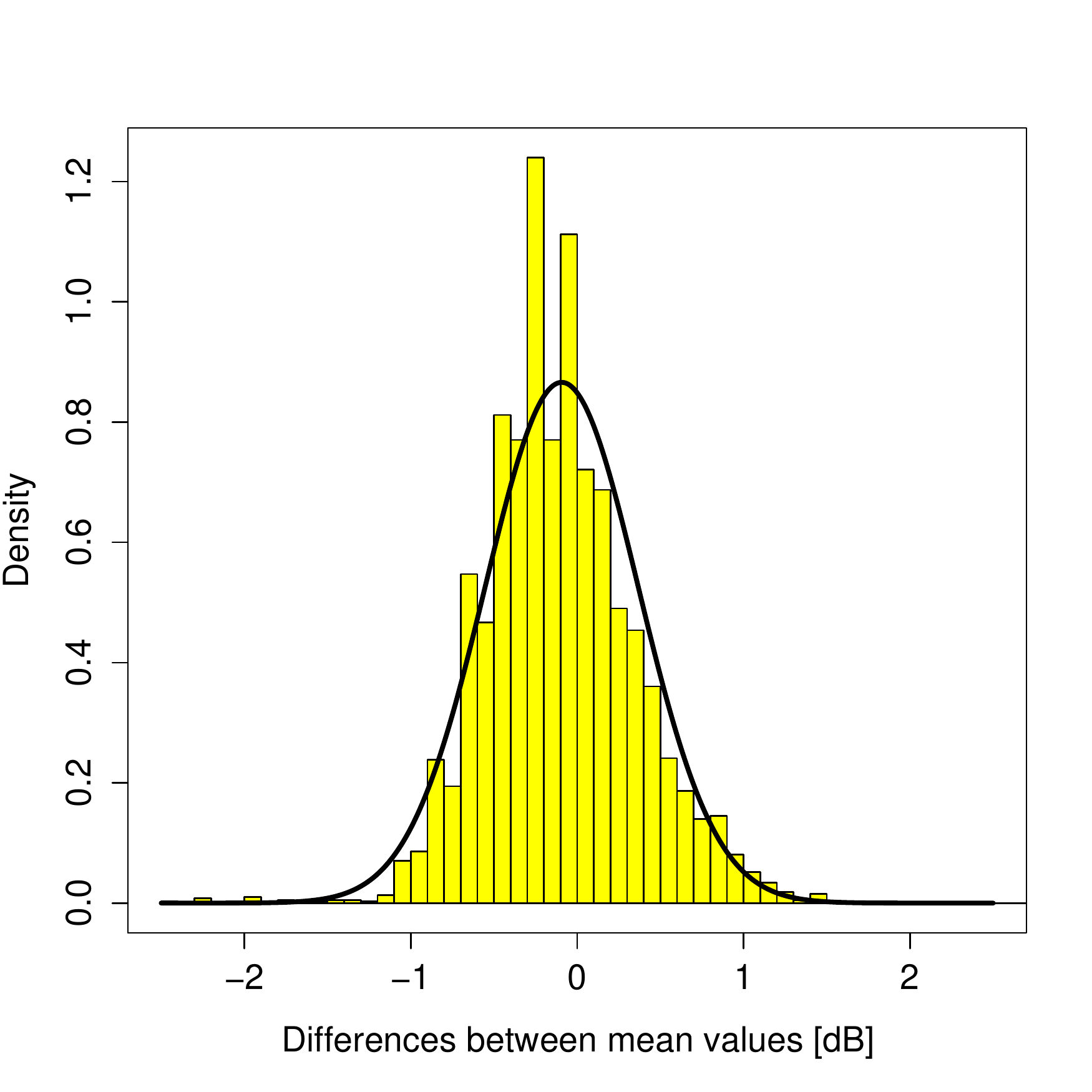}\hspace*{\fill}

}\hspace*{\fill}\subfloat[Distribution of errors for the non-LOS experiment \label{fig:Distribution-of-errors-nLOS}]{\hspace*{\fill}\psfragLabel{Differences between mean values [dB]}{Differences between mean values (dB)}
\psfragLabel{Density}{Density}

\psfragFixDiff{}
\psfragFixDensity{}\includegraphics[width=0.31\textwidth]{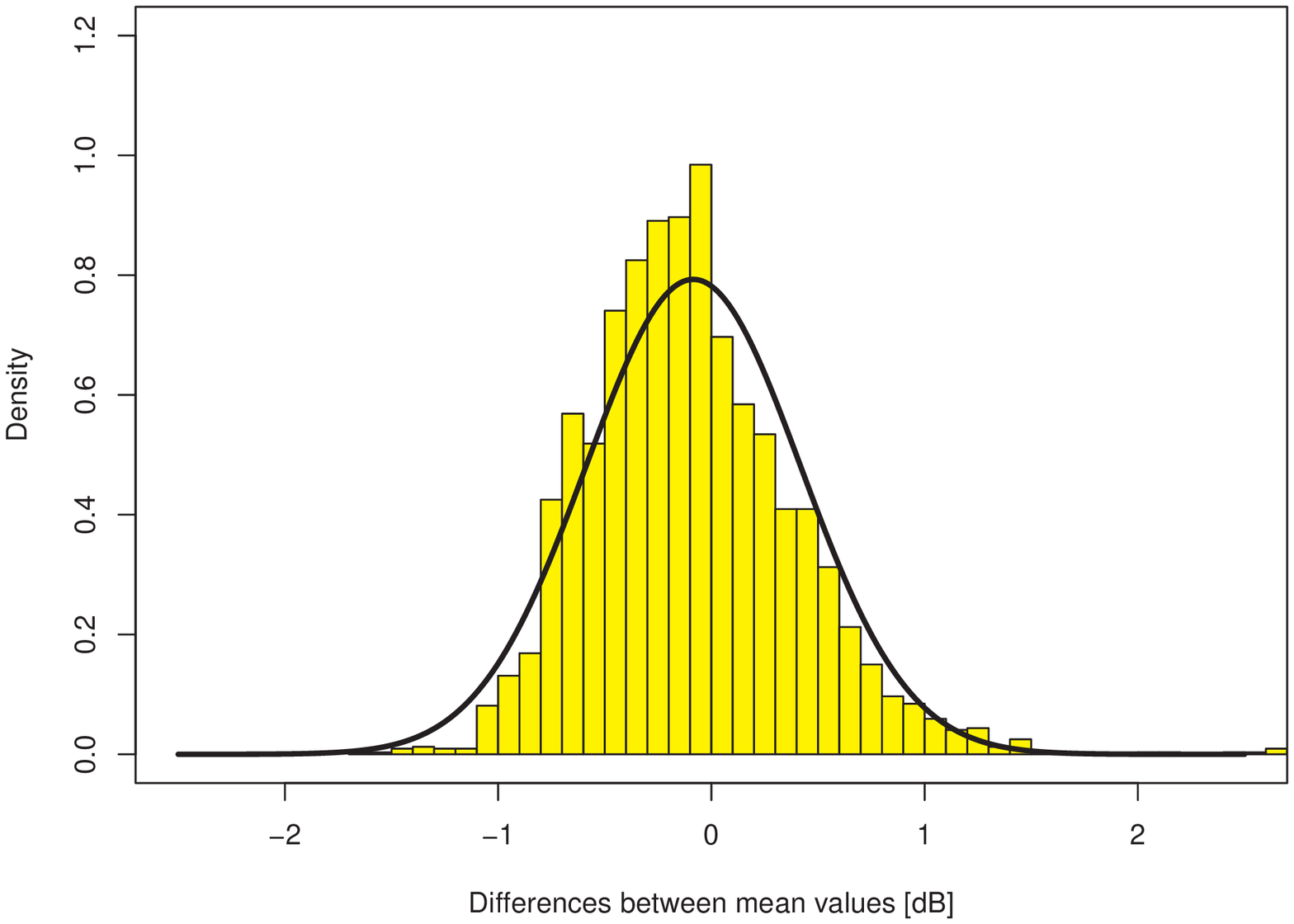}\hspace*{\fill}

}\hspace*{\fill}\subfloat[Success probability of key generation based on all positions. \label{fig:ECDF}]{\hspace*{\fill}\psfragLabel{Tolerance}{Tolerance}
\psfragLabel{Probability of successful key agreement}{Success Probability}

\psfragFixDiff{}
\psfragFixDensity{}\includegraphics[width=0.31\textwidth]{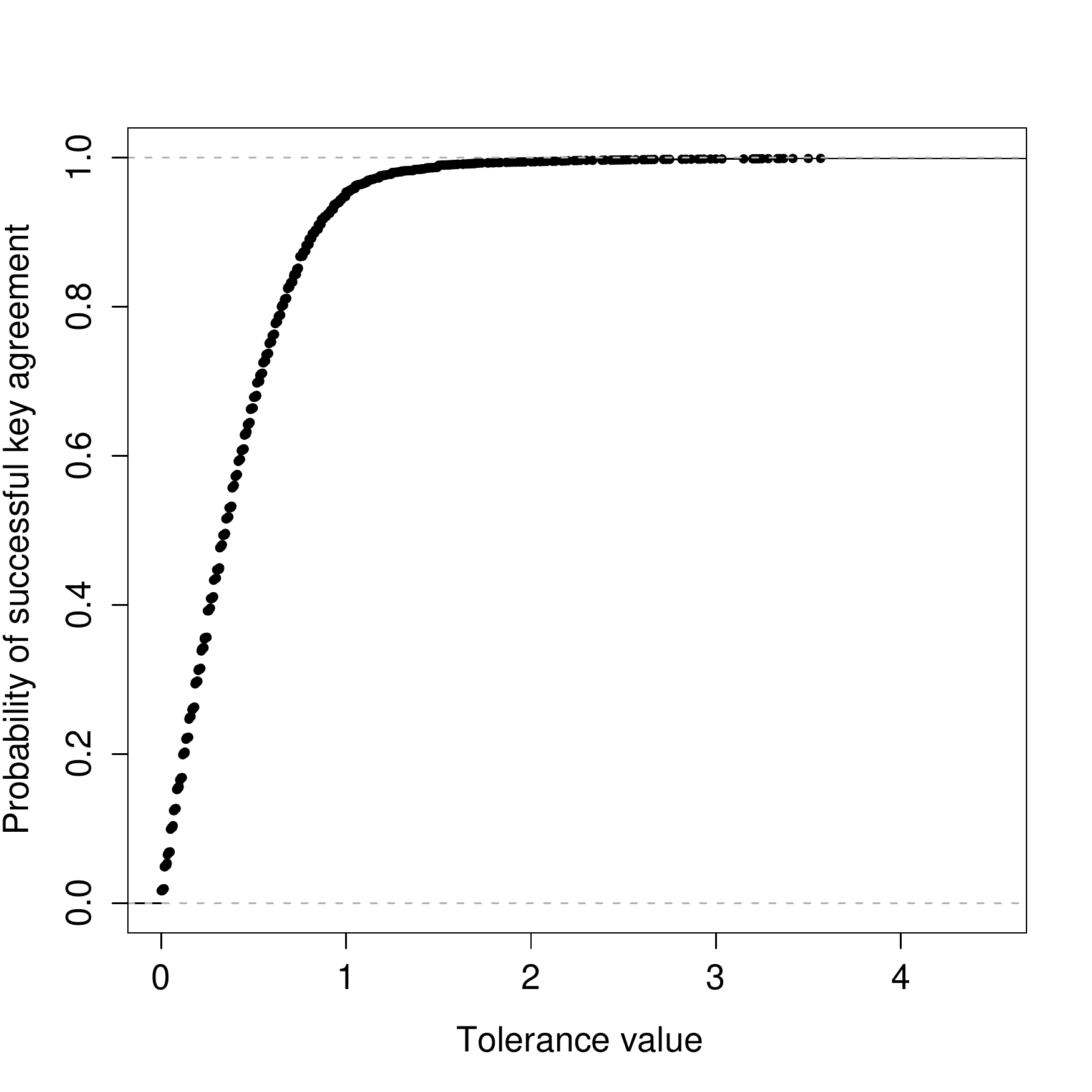}\hspace*{\fill}

}\hspace*{\fill}

\caption{Deviations in the channel and the resulting success rate of key generations
in our experiments. \label{fig:Robustness} }

\end{figure*}

\subsection{WSN Testbed and Methodology}

The experiments were conducted over several~days on a university~floor,
that is, an indoor setting across several rooms. During the measurements,
several wireless LAN access points were concurrently operating in
the 2.4\,GHz~band; so, the experiments were performed in a real-world
environment with unpredictable factors. The environment contains concrete
walls, as well as office furniture made of different materials. Especially
metal objects such as shelves and cabinets with good reflection properties
regarding electromagnetic waves were present. Thus, this set of environment
can be considered to generate a rich multipath effects, while it also
represents a typical indoor scenario. An additional factor for this
changing environment was the movement of people in corridors or office
rooms. 

Several different scenarios were considered to evaluate the impact
of positioning on secrecy and robustness. A large meeting room was
used for experiments, where the sensor motes always maintained a line
of sight connection, and several smaller office rooms were used to
quantify the impact of shadowing objects and walls. For each of these
scenarios, 250 positions were considered, and the distance was kept
constantly at 2.5 meters to avoid the influence of path loss effects.
In long-term and dynamic scenarios, these rooms and the connecting
corridors were used, and 1000 additional positions were tested with
mixed distances and obstacles. We used $k=16$ samples on each channel,
collected on $n=16$ channels.

\subsection{Protocol Robustness \label{sub:Imp-Protocol-Robustness}}

In order to evaluate the robustness of the protocol, a total of 1600~positions
of the two parties was tested, and the measurements and deviations
between the two parties recorded for each of the 16~channels. 

From the deviations $N=N_{\mathrm{Alice}}-N_{\mathrm{Bob}}$ observed,
we can see that they are bounded. The histogram of deviations is given
in Fig.~\ref{fig:Distribution-of-errors-LOS} and \ref{fig:Distribution-of-errors-nLOS},
which also shows that these deviations are fitted well by a zero-mean
Normal distribution with a standard deviation of $\sigma=0.461$\,dB
in the LOS experiment and $\sigma=0.503$\,dB in the non-LOS experiment.
The empirical distributions have even lighter tails than the fitted
Normal distributions. We can use this knowledge to evaluate the success
probability as described in Section~\ref{sub:Protocol-Optimizations}
for protocol optimizations. Based on the experiments, we can conclude
that the reciprocity of the wireless channel is very strong.

The success ratio of the protocol can be directly controlled by the
tolerance values of the code used, as codes with larger tolerance
values are able to correct stronger deviations. With a tolerance of
1\,dB, 94.6\,\% of the key agreements are successful on the first
run. This value is increased to 99.2\,\% with a tolerance of 2\,dB.
The empirical cumulative distribution function (ECDF) of all experiments
is shown in Fig.~\ref{fig:ECDF}. The majority of deviations are
below 2\,dB, and only a small number of extreme outliers were measured.
As the chosen tolerance value also has an impact on the secrecy of
the resulting bit string, a careful trade-off between secrecy and
robustness must be found.

\subsection{Evaluation of the Channel Entropy \label{sub:Imp-Secrecy}}

We evaluated the frequency-selective channel fading effects in two
different environmental settings: \emph{(i)} connections with line
of sight only; and \emph{(ii)} connections with obstacles in the direct
connection, that is, non-LOS connections. The LOS experiment was intended
as the worst-case scenario because a strong LOS component may be able
to dominate the multipath fading behavior. Yet, our experiments show
that this is not the case, and both experiments yield roughly the
same entropy. In all experiments, several different tolerance values
were considered to show the impact of this parameter on the secrecy.

\begin{figure}
\begin{framed}%
\input{figures/t-string/t-string.tex}\end{framed}

\caption{\label{fig:Example-of-T-String}A part of the T-string used for estimating
the Shannon entropy of codewords generated by our key-generation protocol.
This approach is based on encoding the codewords as ASCII strings
and analyzing their minimal representation.}

\end{figure}

The secrecy analysis focuses on the distribution of signal strength
measurements, especially on the entropy that these distributions offer.
The evaluation of the entropy for single channels is straightforward:
we use the empirical distribution to calculate $\mathrm{H}\left(C_{i}\right)$
for each of the $n$~channels individually, using the relative frequencies
as the estimates of codeword probabilities. For example, this analysis
shows that there are 3.5 secret bits available from each channel for
a tolerance value of $t=1$; a value of $t=0.5$ results in an increase
to 4.38\,bit. 

The joint entropy under the assumption of independent channels is
the sum of the channels' entropy values. However, the independence
cannot be assumed as the channels are within the coherence bandwidth,
and using the conventional approach to estimate the Shannon entropy
of dependent channels using sampling is not effective, as this becomes
prohibitive in spaces with larger dimensions. For example, to show
a joint secrecy of 45\,bit, at least $2^{45}$ samples must be collected.
Additionally, the unknown dependency structure of the generated secret
strings makes such quantification harder. The reason is that the Shannon
entropy operates on the knowledge of the underlying joint distribution,
which is unknown in our case. While in the next section we derive
a stochastic model for such analysis, we are still interested in finding
out how much uncertainty is present in the experimental data without
any assumptions on the underlying codeword distribution, i.e., without
requiring any a priori knowledge. The idea we follow is based on construction
complexity described by the notion of T-complexity~\cite{T-Entropy}.
T-complexity quantifies the difficulty to decompose input strings
into codewords of T-codes, i.e., the complexity when trying to find
the minimal representation of the input string. Speidel~\emph{et~al.}~\cite{EstimateShannon}
show in their work that T-complexity is the fastest to converge to
the true value of the Shannon entropy, and provide an algorithm that
enables fast computations of entropy values. The tool \texttt{tcalc}~\cite{tcalc},
developed by the same group, was used to evaluate our results. As
this tool operates on byte strings, we had to convert the lists of
quantized values to arrays consisting of different ASCII characters
as input. These characters were concatenated to form a large string
that can be used as input to \texttt{tcalc}. A part of the T-string
used is given in Fig. \ref{fig:Example-of-T-String}.

As a result, using this method we were able to capture the dependencies
between channels in the empirical data without explicitly knowing
them. The results from this analysis are discussed in the next subsection
and in Section~\ref{sub:Modeling-Channel-Dependency} we use them
for the validation of the derived stochastic model.%
\begin{figure}
\input{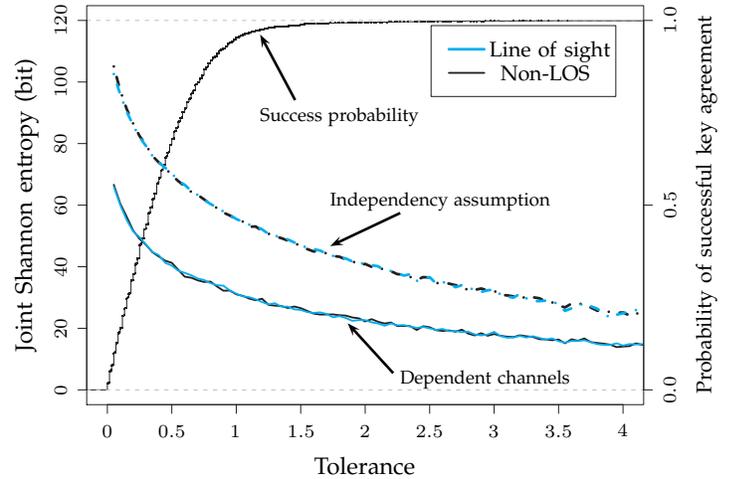}

\caption{Results for the implementation on MICAz sensor motes. The amount of
secrecy under different dependency assumptions is shown, with the
corresponding success probabilities of key agreement. \label{fig:Empirical-H-summary}}

\end{figure}

\subsubsection{Results from Experimental Analysis}

A comparison of results showing the available entropy from the experimental
data is shown in Fig.~\ref{fig:Empirical-H-summary}. With a tolerance
value of $t=1$, the entropy under independence assumption is 56\,bit
for both LOS and non-LOS connections. When considering the dependencies
in the measurements, 31\,bit of entropy can be achieved with the
limited number of channels and precision that the wireless sensor
mote hardware offers. Lower tolerance values can be used to increase
secrecy. For example, a tolerance value of 0.4, which results in a
56\,\% chance of successful key agreement, offers 45--50 secret bits
under dependent channels. 

The entropy of generated shared secrets in this settings can be compared
with conventional password-based security schemes and applied to the
protocols such as, for example, commitment-based authentication protocols
using short authenticated (e.g., \cite{ShortAuthenticatedStrings,ShortRandomPasswords,WeakPasswords,HumanPasswords}).
Similarly, protocols such as the Encrypted Key-Exchange (EKE) apply
short shared secrets for confidential exchange of public key material
(e.g., \cite{EncryptedKeyExchange,GeneEKE}). The shared secrets in
such applications are usually created by the user and contain approximately
18\,bit entropy due to dependency between characters (for a comprehensive
overview of password entropy, see \cite{EntropySymbolSeqs}). Since
these protocols protocols play an important role in wireless networks
as a part of device-pairing schemes, generating secrets from the wireless
channel can be seen as their valuable extension and alternative to
an user-required input of secrets.

\section{Increasing the Length of a Secret\label{sec:Analysis}}

The experimental analysis shows that the dependencies between channels
have considerable influence on the secrecy of the proposed protocol.
In contrast to previous section, we now develop a stochastic model
that makes these dependencies explicit and enables us to analyze and
predict ways to increase the achievable secrecy. Especially, we want
to answer questions such as: what is the impact of increasing the
number of available channels, and increasing the spacing between center
frequencies. To derive a realistic model of dependent wireless channels,
we start with fitting and validating the distribution of single channel
measurements and then extending it to a multivariate case, which captures
the dependencies between wireless channels. The model is validated
by comparing the resulting entropy values with our empirical results.

\subsection{Modeling Channel Dependency \label{sub:Modeling-Channel-Dependency}}

Frequently used distributions for large-scale models of wireless channels
are Rayleigh, Ricean, or Log-Normal \cite{Rappaport} depending on
the properties of the respective propagation environment. Also, in
scenarios common to WLANs and WSNs, where distances between transceivers
are short, the empirical data can be approximated by the Normal distribution
\cite{shengSpooing,Kaemarungsi,BoseNormRSS}. To find an adequate
distribution, we collected 4000 RSS sample means for each of the LOS
and non-LOS scenarios, where every RSS mean was calculated over 16
measurements, estimating the distribution parameters using Maximum
Likelihood Estimation (MLE). The resulting fit of the Rayleigh and
Normal distributions to the empirical data is shown in Fig.~\ref{fig:Normal-vs-Rayleigh}.
Additionally, we tested the normality of the sampled data using the
probability plot correlation coefficient test for normality (PPCC),
which is based on checking for linearity between the theoretical quantiles
and the sample data \cite{FillibenPPCC}. In fact, the goodness of
fit test confirms that the Normal distribution (correlation coefficient~$=0.992$)
can be assumed with an even higher confidence than the corresponding
Rayleigh distribution (correlation coefficient~$=0.967$). In this
case, the multivariate Normal distribution can be used to describe
the complex dependency structures of wireless channels by directly
estimating the covariance matrix from the empirical data. 

\begin{figure*}
\hspace*{\fill}\subfloat[Fitting of different distributions to the empirical data. \label{fig:Normal-vs-Rayleigh}]{\hspace*{\fill}\psfragLabel{Normalized RSS}{Normalized RSS}
\psfragLabel{Density}{Density}
\psfrag{Normal}[][]{\tiny{Normal}}
\psfrag{Rayleigh}[][]{\tiny{\!Rayleigh}}

\psfragTick{0.00}
\psfragTick{0.02}
\psfragTick{0.04}
\psfragTick{0.06}
\psfragTick{0.08}
\psfragTick{0.10}
\psfragTick{0.12}

\psfragFixTens[\tiny]{}\includegraphics[width=0.31\textwidth]{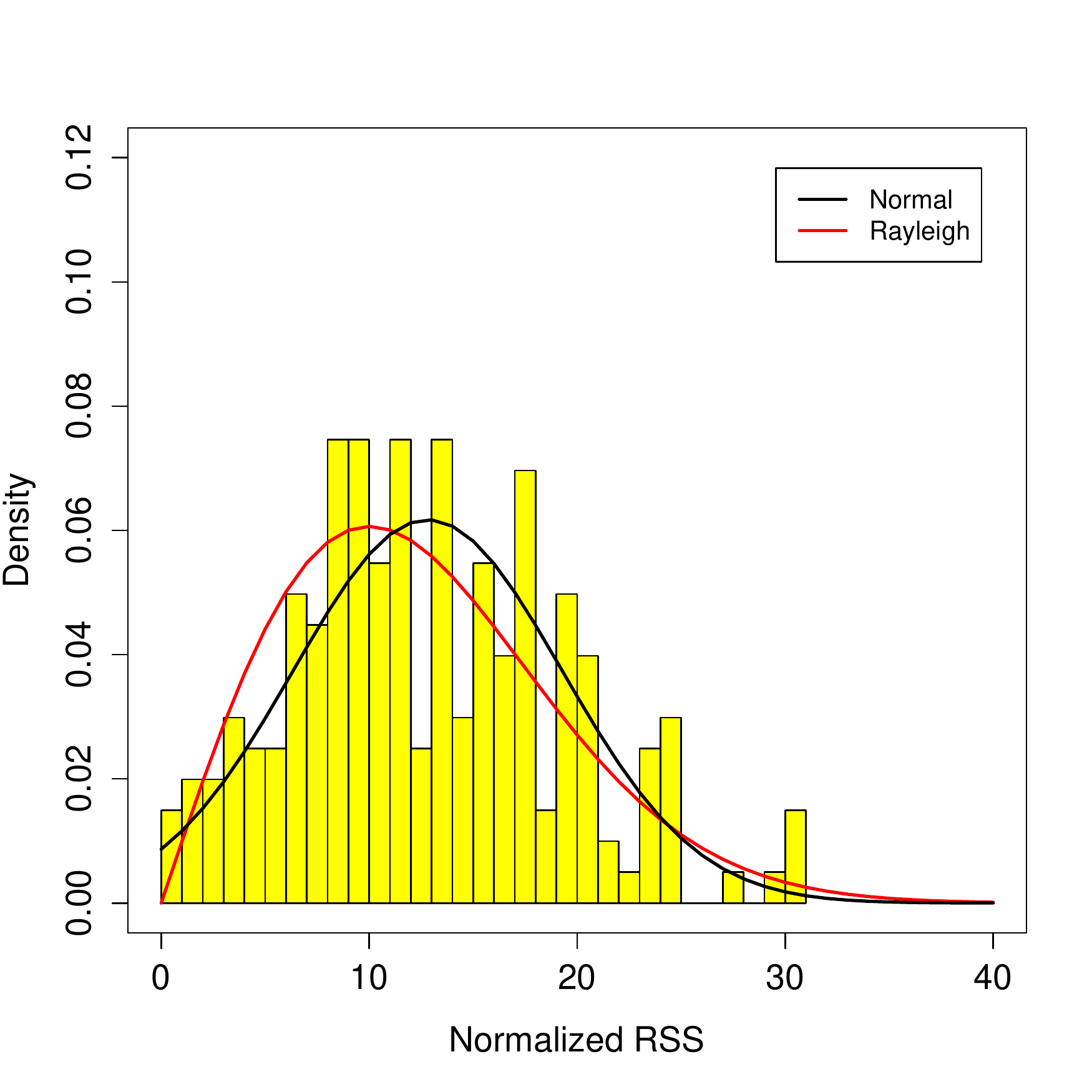}\hspace*{\fill}

}\hspace*{\fill}\subfloat[Rayleigh Probability-Probability test \label{fig:Test-Rayleigh}]{\hspace*{\fill}\psfragLabel{Rayleigh Theoretical Quantiles}{Rayleigh Theoretical Quantiles}
\psfragLabel{RSS Sample Data}{RSS sample data (dBm)}

\psfrag{Correlation Coefficient= 0.9674}[][]{\tiny{Corr. Coefficient=$0.9674$}}

\psfragFixRSSLevels{}
\psfragTick{0}
\psfragTick{5}
\psfragTick{10}
\psfragTick{15}
\psfragTick{20}
\psfragTick{25}
\psfragTick{30}
\psfragTick{35}\includegraphics[width=0.31\textwidth]{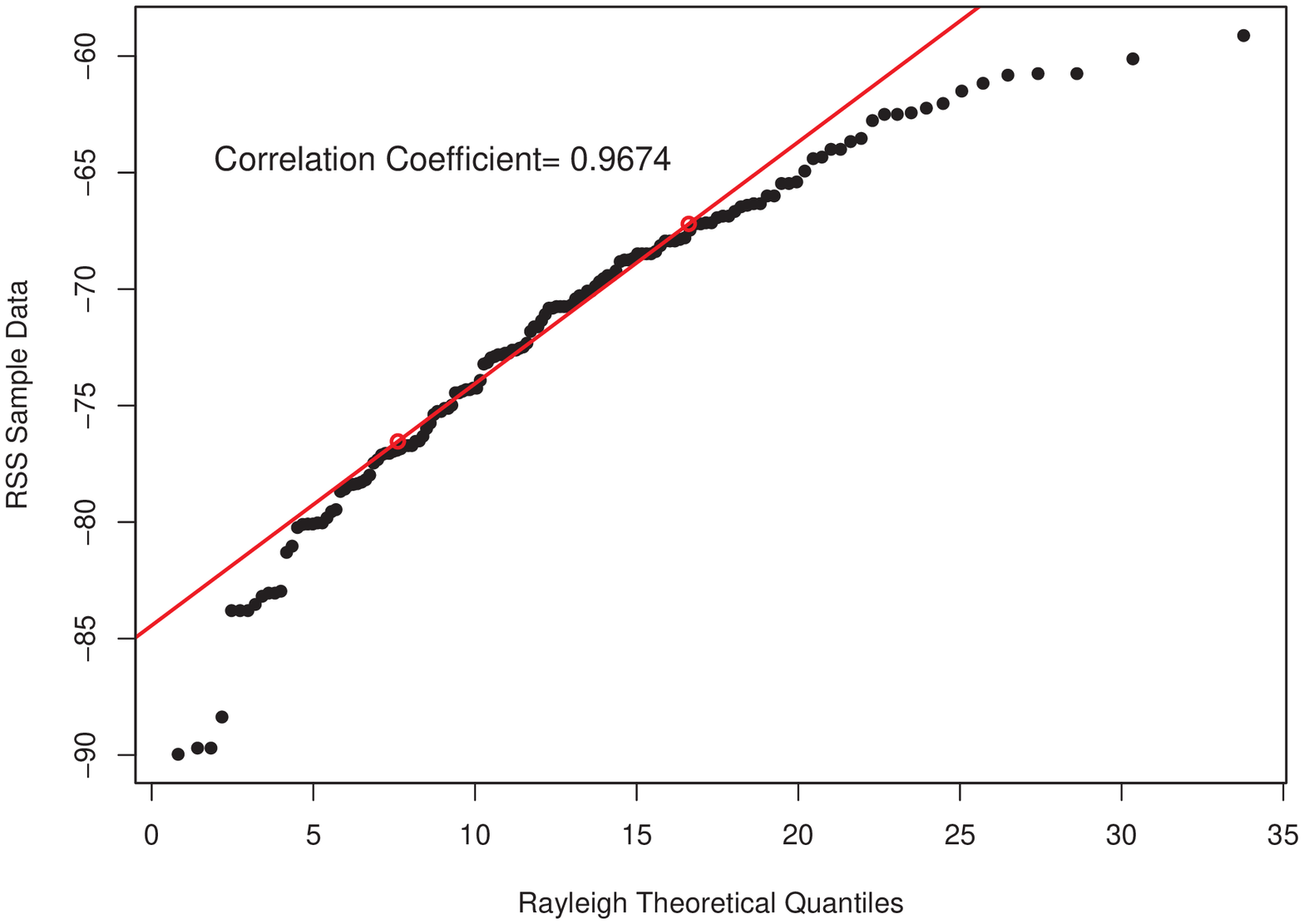}\hspace*{\fill}

}\hspace*{\fill}\subfloat[Normal Probability-Probability test \label{fig:Test-Normal}]{\hspace*{\fill}\psfragLabel{Normal Theoretical Quantiles}{Normal Theoretical Quantiles}
\psfragLabel{RSS Sample Data}{RSS sample data (dBm)}

\psfrag{Correlation Coefficient= 0.9927}[][]{\tiny{Corr. Coefficient=$0.9927$}}

\psfragFixRSSLevels{}
\psfragFixDiff{}\includegraphics[width=0.31\textwidth]{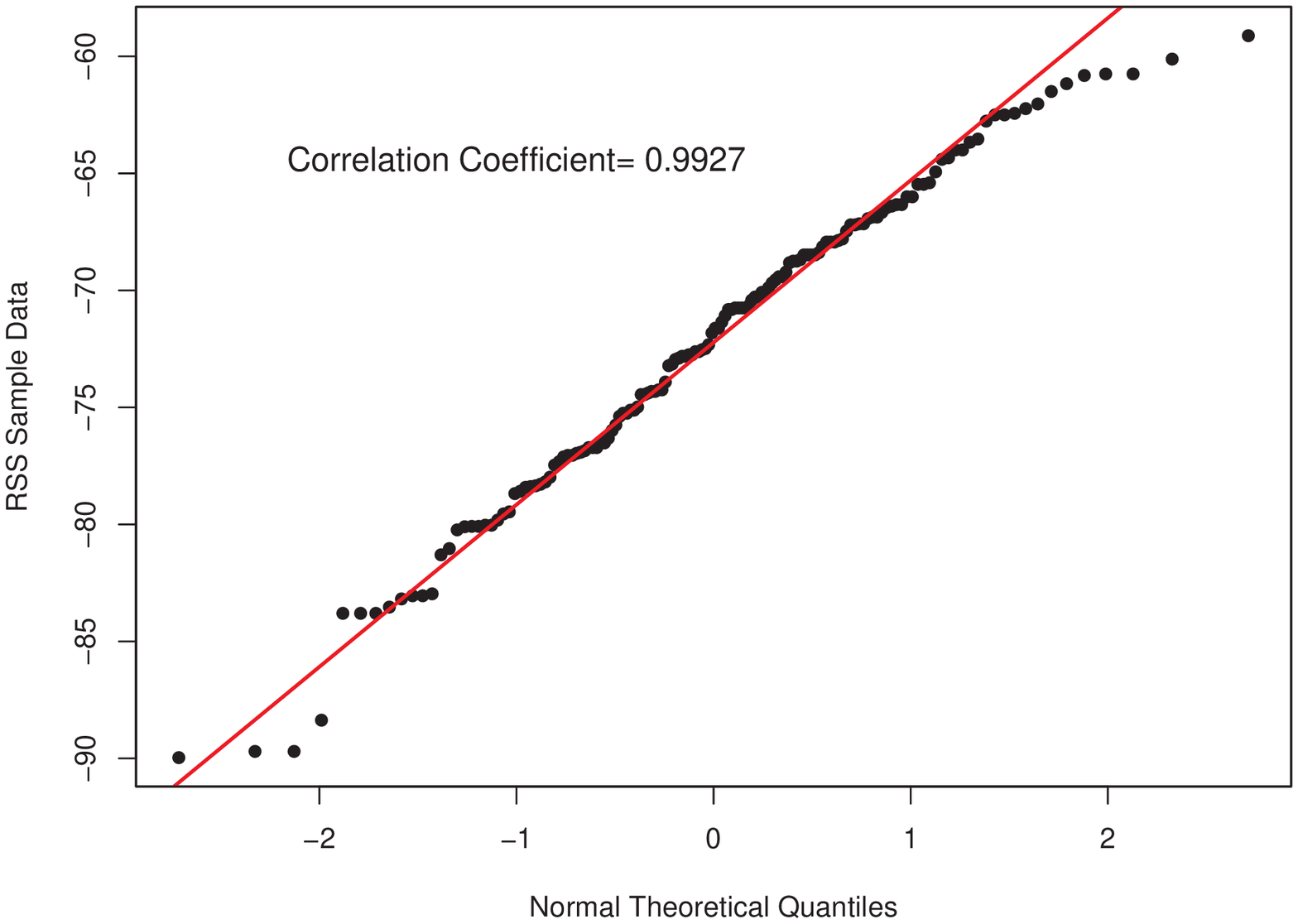}\hspace*{\fill}

}\hspace*{\fill}

\caption{Test for different distributions of the empirical data.\label{fig:Ivan's-figures}}

\end{figure*}

Hence, to analyze the dependencies of the joint distribution over
all 16 wireless channels, especially with respect to the joint entropy,
we model the signal strength values of different channels using a
single 16-dimensional multivariate Normal distribution. The distribution
parameter estimation is straightforward: the vector of mean values
$\mu$, which is in case of the Normal distribution already the MLE
for the population mean, and for the covariance matrix $\varSigma$
we used the MLE method: \[
\hat{\varSigma}=\frac{1}{k-1}\sum_{j=1}^{k}\left(\boldsymbol{m}^{(j)}-\boldsymbol{\mu}\right)\left(\boldsymbol{m}^{(j)}-\boldsymbol{\mu}\right)^{T}.\]

Finally, we validated the multivariate channel dependency model against
our empirical data by using the same error correction mechanism (described
in Section~\ref{sec:Protocol-Design}) to generate secret strings
and to compare the Shannon entropy of the empirical data with the
results of the model. The results of this evaluation are given in
Fig.~\ref{fig:Empirical-vs-Model}, which shows the resulting entropy
values for the non-LOS data applying the same analysis methods used
in the experimental analysis. The LOS experiment is omitted as the
behavior is similar. The model captures the dependency structure well,
resulting in a similar progression of the curve for the existing tolerance
values, although the entropy is slightly overestimated by the model.

\begin{figure}
\input{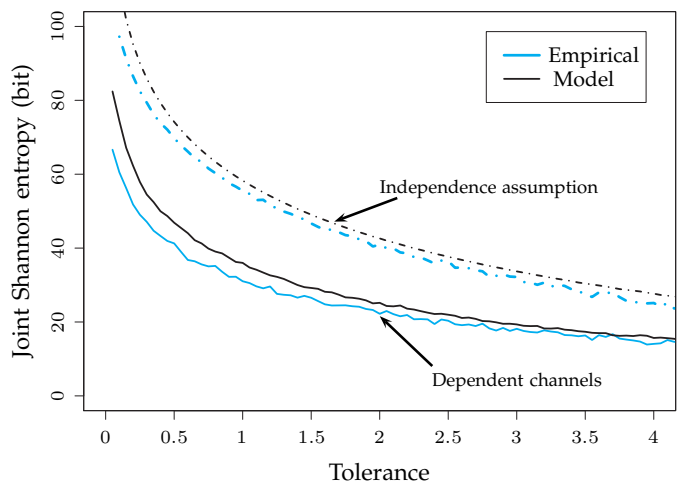}

\caption{Comparison of discrete entropy values based on RSS values generated
using the stochastic model. \label{fig:Empirical-vs-Model}}

\end{figure}

Using this model, we can estimate the amount of entropy if additional
resources are available, such as a higher number of channels or a
larger spacing between channels. We only need to consider the properties
of the covariance matrix~$\varSigma$ with respect to entropy. The
differential entropy (in natural units) of the multivariate Normal
distribution is given by \begin{equation}
H_{\mathrm{mv}\mathcal{N}}=\frac{1}{2}\ln\left(\left(2\pi\mathrm{e}\right)^{n}\det\varSigma\right),\label{eq:H_mvn}\end{equation}
depending on the number of channels~$n$ and the determinant of $\varSigma$.
The first-order effect of increasing the number of channels is easy
to quantify, the differential entropy is increased by $2.05$\,bit
for each additional channel. However, the relationship is not obvious
with respect to the determinant. In the case of independence, only
the main diagonal of the covariance matrix is populated, but in the
general case the complete matrix has an influence that is hard to
quantify.

\subsection{More Channels or Larger Frequency Spacing}

First, we consider the effects of the determinant on the security
given a larger number of channels. To this end, we extrapolate the
covariance matrix and evaluate the effect on the determinant. 

Two different prediction methods are used, one that extrapolates $\varSigma$
directly and another that also simulates the effect of larger spacing
between center frequencies and then extrapolates the matrix.

We used $\left(i{\times}i\right)$ sub-matrices with $i=1,\ldots,15$
of the matrix $\varSigma$ to predict the $16{\times}16$ matrix $\varSigma$.
Only the values contained in the sub-matrix are used, in the following
manner: each diagonal is treated independently, as it represents a
different lag in the covariances. The missing elements of the matrix
are chosen uniformly from a range between minimum and maximum values
on the respective diagonal. The results of this $16{\times}16$ prediction
for the non-LOS experiment are shown in Fig.~\ref{fig:16x16-Prediction}.
A sample of 100 extrapolated covariance matrices was used to predict
the known amount of differential entropy for 16 channels, the used
confidence level in the graph is 95\,\%. The horizontal line represents
a differential entropy using the correct $\varSigma$ from the experiments.
The predicted entropy values using different sub-matrix sizes are
shown, obtained from mean values of different uniform extrapolations.
Even with small $2{\times}2$ prediction matrices, it is possible
to estimate the entropy accurately. The evaluation for the LOS experiment
is not shown, but gave similar results. Thus, we can use the estimation
of $\varSigma$ to predict the secrecy from a larger number of channels.

The second matrix extrapolation method was used to evaluate the effects
of a larger spacing between the channels. Only every second (third,
$n$-th) diagonal was used and the remaining ones were removed for
this analysis. This simulates a channel spacing of 10\,MHz (15\,MHz,
$5n$\,MHz). This smaller matrix is then extrapolated in the same
fashion as described before. The quality of prediction is comparable
to the previous results. 

\begin{figure}
\input{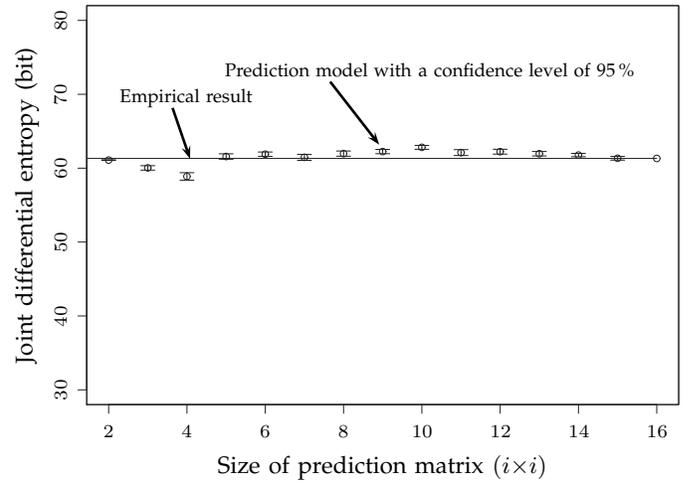}

\caption{Prediction of the differential entropy using only a subset of available
channels. Even with a small number of channels, an accurate prediction
is possible. \label{fig:16x16-Prediction}}

\end{figure}

Fig.~\ref{fig:Extrapolation-of-covariance} shows the increases of
entropy we can observe from our model. The figure shows the results
of the non-LOS experiment only, but the LOS experiment gave similar
results. The results are given in differential entropy, which does
not take the tolerances into account. The lowest line describes the
increase in joint differential entropy if we use the same determinant
we obtained from 16 channels. This results in an increase of 2.05\,bit
for each channel, but it is also a very conservative prediction, it
overestimates the dependencies between channels with center frequencies
far apart from each other. Using extrapolation based on the $16{\times}16$
matrix and calculating the entropy using Eq.~(\ref{eq:H_mvn}) and
the new $\varSigma$, we see an increase of 4.02\,bit for each additional
channel. The slashed line shows an additional gain if the channels
are spaced 10\,MHz\,apart, instead of the 5\,MHz spacing in our
experiments, yielding a 4.25\,bit increase. Our model shows that
there are several ways to increase the secrecy of the proposed protocol.
With measurements of higher precision it is possible to generate more
bits on each channel, but as this increases the hardware costs, it
is advisable to rather use a larger number of channels. 

\begin{figure}[h]
\psfragLabel[\small]{Number of channels}{Number of channels}%
\psfragLabel[\small]{Joint differential entropy}{Joint differential entropy (bit)}%
\psfragLegend{Fixed determinant}{\!Fixed $\det\left(\varSigma\right)$}%
\psfragLegend{Extrapolated determinant}{Extrapolated $\det\left(\varSigma\right)$}%
\psfragLegend{10MHz spacing}{~~~~~10\,MHz spacing}%
\psfragFixTens[\scriptsize]{}
\psfragTick[\scriptsize]{50}%
\psfragTick[\scriptsize]{100}%
\psfragTick[\scriptsize]{150}%
\psfragTick[\scriptsize]{200}%
\psfragTick[\scriptsize]{250}%
\psfragTick[\scriptsize]{300}%
\psfragFixDiff[\scriptsize]{}\includegraphics[width=1\columnwidth]{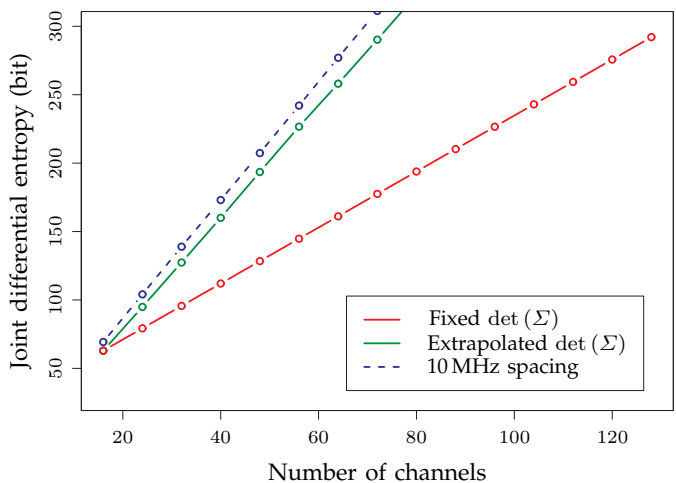}\caption{Extrapolation of covariance matrix $\varSigma$ for a larger number
of channels to evaluate of the model with respect to secrecy gains.
 \label{fig:Extrapolation-of-covariance}}

\end{figure}

\section{Conclusion \label{sec:Conclusion}}

Secret key generation and distribution poses one of the main security
challenges in wireless networks, especially in computation-limited
WSNs. In conventional security schemes, the wireless channel is usually
considered as a part of an adversarial toolbox which additionally
helps to launch different attacks by abusing its broadcast nature.
Yet, in recent years a number of papers following an alternative approach
to wireless security have demonstrated that the unpredictable and
erratic nature of wireless communication can be used to enhance and
augment conventional security designs. Taking advantage of physical
properties of signal propagation, mutual secrets between wireless
transmitters can be derived. While this approach for securing wireless
networks has been recently addressed in \cite{Telepathy,Envelopes},
both contributions require movement as the main generator of secret
material. Although valuable to mobile networks, such solutions are
not applicable to the majority of WSN applications which are based
on static sensor motes. 

The main focus of this work was to overcome this limitation. We started
by introducing a system model based on real-world measurements using
IEEE~802.15.4 technology, and describing building blocks of a novel
key generation protocol. To demonstrate its applicability, the protocol
was implemented and evaluated using MICAz sensor motes. Experiments
show that the protocol is able to successfully generate keys in over
95\,\% of the cases, irrespective of environmental properties. By
using only a very limited number of wireless channels, the proposed
protocol can already provide secrets up to 50\,bit, depending on
the wireless channel behavior. A stochastic model derived in this
work validated our experimental data and provided guidelines on how
to increase the length of the secret keys based on either increasing
the number of wireless channels or increasing the channel spacing.
For example, if the number of channels of the present IEEE~802.15.4
is set to 40, this protocol can generate up to 160\,bit secret keys
in static scenarios. 

The possibility to increase the length of a secret by using additional
wireless channels or larger frequency spacing is an interesting alternative
to computational-based approaches not only from the security perspective
but also from the network throughput perspective. For example, cognitive
radio is focused on increasing the utilization of limited radio resources
by dynamically adjusting the transmission to interference-free frequencies.
The key generation protocol introduced in this work can inherently
take advantage of such technologies.

Finally, this approach to key generation is intended to extend and
support conventional security designs as it only needs a limited number
of messages exchanges to generate shared secrets even on the currently
available, off-the-shelf WSN devices.

\bibliographystyle{plain}
\bibliography{references}

\end{document}